\newtheorem{Theo}{Theorem}
\newtheorem{lemma}{Lemma}
\newtheorem{rmk}{\sf Remark}
\newtheorem{definition}{Definition}
\newtheorem{coro}{Corollary}
\DeclareMathAlphabet{\mathpzc}{OT1}{pzc}{m}{it}
\renewcommand{\baselinestretch}{1.9}
\def\diag{{\rm diag}}
\def\tr{{\rm tr}}
\def\b0{{\mathbf 0}}
\begin{document}

% paper title
\title{Practical Design for Multiple-Antenna Cognitive Radio Networks with Coexistence Constraint}

%\author{Pin-Hsun Lin \hspace{1.8cm} Gabriel P. Villardi \hspace{1.8cm} Zhou Lan\hspace{1.8cm}Hiroshi Harada \\

\author{
\authorblockN{Pin-Hsun Lin \IEEEmembership{Member, IEEE}, Gabriel P. Villardi, \IEEEmembership{Senior Member, IEEE}, \\
Zhou Lan, \IEEEmembership{Member, IEEE}, Hiroshi Harada, \IEEEmembership{Member, IEEE} }%%%%%%%%%%%%%%%%%%%%%%%%%%%%%%%%%%%%%%%%%%%%%%%%%%%%%%%%%%%%%%%%%%%%%%%%%%%%%%
%Footnotes (Thanks finical support)
%%%%%%%%%%%%%%%%%%%%%%%%%%%%%%%%%%%%%%%%%%%%%%%%%%%%%%%%%%%%%%%%%%%%%%%%%%%%%%

\thanks{
Pin-Hsun Lin, Gabriel P. Villardi, Zhou Lan, and Hiroshi Harada are with the Smart Wireless Laboratory, National
Institute of Information and Communications Technology, Yokosuka,
Kanagawa, Japan. Emails: [pslin, gpvillardi, lan, harada]@nict.go.jp.} }

% make the title area
\maketitle \thispagestyle{empty} \vspace{-15mm}
{%\renewcommand{\baselinestretch}{0.8}
\begin{abstract}

In this paper we investigate the practical design for the multiple-antenna cognitive
radio (CR) networks sharing the geographically used or unused
spectrum. We consider a single cell network formed by the primary users (PU), which are half-duplex two-hop relay
channels and the secondary users (SU) are single user additive white
Gaussian noise channels. All transmitters and receivers are with
multiple antennas. In addition, the coexistence constraint which
requires PUs' coding schemes and rates unchanged with the
emergence of SU, should be satisfied. The contribution of this paper are twofold. First, we explicitly design the scheme to pair the SUs to the existing PUs in a single cell network. Second, we jointly design the nonlinear precoder, relay beamformer, and the transmitter and receiver beamformers to minimize the sum mean
square error of the SU system. In the first part, we derive an approximate relation between the relay ratio, chordal distance and strengths of the vector channels, and the transmit powers. Based on this relation, we are able to solve the optimal pairing between SUs and PUs efficiently where the metric is the sum rate of SUs' network. In the second part, considering the feasibility of implementation, we exploit the Tomlinson-Harashima precoding
instead of the dirty paper coding to mitigate the interference at
the SU receiver, which is the known side information at the SU
transmitter. To complete the design, we first approximate the optimization
problem as a convex one. Then we propose an iterative algorithm to
solve it with CVX. This joint design exploits all the
degrees of design freedom including the spatial diversity, power, and the side information of the
interference at transmitter, which may result in better SU's
performance than those without the joint design. To the best of our
knowledge, both the two parts have never been considered in the literature. Numerical results show that the proposed pairing scheme outperforms the greedy and random pairing with low complexity. Numerical results also show that even if all the
channel matrices are full rank, under which the simple zero forcing scheme
is infeasible, the proposed scheme can still work well.
\end{abstract}

\section{ Introduction}
Recently, many research efforts have been devoted to the studies on the
so-called cognitive radio (CR) technology \cite{Mitola_CR}, which
enables unlicensed (or secondary) users to dynamically sense and
locate unused spectrum segments and to communicate via these unused
spectrum segments. Most CR systems initially proposed in the
literature, e.g., \cite{Haykin_CR} and references therein, adopt the
interference avoidance (underlay) based approach, which requires
secondary users (SUs) to vacate the spectrum once the primary users (PUs) signals are sensed. The concept of
interference avoidance based CR has been standardized, e.g., the
IEEE 802.22 Wireless Regional Area Network (WRAN) standard \cite{80222} (802.22 or 802.22 WRAN
herein). It is aimed at using the CR techniques to allow the sharing of the
\textit{geographically unused} spectrum which is originally
allocated to the television multi-cast service, and do no harmful
interference to the incumbent operation, i.e., digital TV and analog
TV multi-casting. However, the interference avoidance based CR
demands fast and accurate sensing of spectrum holes, or it can not
operate in the region covered by PU \cite{80222}. Furthermore, it
may not yield the most efficient spectrum utilization since only one
user can access the specific frequency bands at any given time and/or geographical
location.
%To achieve a
%non-zero rate of simultaneous transmission for the SUs with PUs when
%their transmissions do not cause any rate degradation at PUs even
%when the latter employ only single-user decoders, Jovicic and
%Viswanath proposed in \cite{Jovicic_CR} a scheme that utilizes
%relaying by the secondary transmitter to overcome the interference
%caused by the simultaneous transmission of SU's message. For the SU,
%the interference caused by the received signals corresponding to
%PU's message at the secondary receiver (which is received from both
%the direct and the relay links) is mitigated by employing dirty
%paper coding (DPC) \cite{Jovicic_CR}\cite{Costa_DPC} at the
%secondary transmitter. This transmission scheme has been shown to be
%capacity-achieving when the channel gain between secondary
%transmitter and primary receiver is smaller than that between the
%secondary transmitter and receiver \cite{Jovicic_CR}.

In this paper we consider the practical design of the \textit{interference mitigating} (overlay) \cite{devroye_CR} multiple-antenna CR
systems \cite{devroye_CR}, where the primary system includes several two-hop relay channels in a single cell. Contrary to the CR systems in \cite{80222}
which can only use the geographically unused spectrum, here the
considered CR systems can additionally access the frequency bands which are
geographically used with the coexistence constraint. The detail will be explained in the next section. And the practical scenarios of the considered model may include the homogeneous and heterogeneous networks. In both kinds of networks, SUs help to relay PU's signals to use PU's spectrum as a tradeoff. Besides, an apparent benefit of SU's relay is that when the channel between PU transmitter and receiver is poor or unstable, PU's transmission can still be successful. On the other hand, several capacity results of the interference mitigation CR utilizing the dirty paper coding (DPC) are
discussed in \cite{Jovicic_CR}\cite{Wu_DPC_CR}\cite{Rini_DPC_CR}.
Although DPC is capacity achieving, the complexity makes it
infeasible for practical systems. Instead of using DPC, in this
paper we design the practical interference mitigation based CR by exploiting the
Tomlinson-Harashima precoding (THP) \cite{THP1}\cite{THP2}. Note
that DPC capacity can be achieved by the nested lattice coding
\cite{ShamaiMultibinning}. And that means the \textit{binning} operation
in the random coding scheme of DPC can be accomplished by the vector
quantization. On the other hand, THP uses a scalar quantization
and thus can be treated as a special case off DPC. Due to its
simplicity, THP is widely used in channels where the side information is known at the transmitter \cite{caire_channel_with_SI,ShamaiMultibinning,Shenouda_THP,Stankovic_THP},
albeit the modulo loss, the shaping loss, and the power loss
\cite{Yu_trellis_precoding} relative to DPC. On the other hand, spatial diversity is also exploited to avoid the interference
between CR and PR's transmission by designing linear precoders
and/or receiver beamformers
\cite{Zhang_ZF_CR}\!\!\cite{Luca_ZF_CR}\!\!\cite{Gharavol_BF_CR}. Also,
the relay assisted transmission is considered in CR
design \cite{Hamdi_CR}\!\!\cite{Zheng_CR}.

 Different from the above, we consider a much more complete and complicated model: several SU transmitter and receiver pairs coexist with the PU network where all PUs are assumed to be two-hop relay channels. All nodes are assumed to have multiple antennas. Before the SUs can simultaneously transmit with PUs in the same time-frequency slot, each SU should be paired to only one PU, vice versa, to form an interference mitigation cognitive channel. After that, the transmitters and receivers in each single cognitive channel should be designed. Upon solving these two problems, the main contributions of this paper are listed in the following.\\
1. We derive an approximate representation of the relay ratio in terms of the transmit powers of PU and SU networks and the chordal distance and the strengths of the vector channels. Based on this representation, we are able to design an explicit scheme to pair the SUs to the existing PUs in a single cell network with low complexity, while the sum rate of SUs' network is maximized. To the best of our knowledge, such pairing design for the overlay CR with coexistence constraint has never been considered in the literature. Numerical results show that the proposed pairing outperforms the greedy algorithm and random pairing with low complexity.\\
2. We design the SU transmitter and receiver in each cognitive channel by exploiting all the degrees of design freedom such as the side information at transmitter, the spatial diversity, and the transmit power to jointly design the non-linear precoder, transmit and receive beamformer, and the relay matrix, for the interference mitigation based CR with coexistence constraint. To the best of our knowledge, such joint design has never been considered in the literature. To attain this goal, we first form an optimization problem, where the objective is to minimize the sum mean square error (SMSE) of the CR system, subject to the coexistence and power constraints. However, since the problem is non-convex, we approximate it and propose an iterative algorithm to solve it by the CVX \cite{cvx}. Numerical results show that the performance of the proposed joint design outperforms that with the precoder designed by the generalized zero forcing, not to mention the traditional zero forcing. Especially when the channels are full rank, there is no null space can be used whence the traditional zero forcing scheme can not be applied.

The rest of the paper is organized as follows. In Section
\ref{Sec_system_model} we introduce the considered system model. In Section \ref{Sec_pairing} we solve the pairing problem between the SU and the PU's networks. The practical design problem of the SU transceiver is discussed in Section
\ref{Sec_practical_design}. In Section \ref{Sec_numerical_result} we illustrate the numerical results. Finally, Section \ref{Sec_conclusion} concludes
this paper.

\section{System model}\label{Sec_system_model}
The considered cognitive channel is shown in Fig. \ref{Fig_sys}. In this model, all the transmitters and receivers
are with multiple antennas. Assume that the primary system is a
half-duplex two-hop relay channel, i.e., in the odd time slots the base station transmits to the PU-TX, and in the even time slots the PU-TX relays to the
PU-RX. This kind of network plays an important role in reducing the
transmit power and/or extending the coverage of communication
networks, which is considered in several wireless standards, e.g.,
\cite{80216}, etc. Contrary to the CR systems in \cite{80222} which
can only use the geographically unused frequency bands, here the considered
CR systems can access the frequency bands both geographically used
or unused with the \textit{coexistence constraint}, which
is defined as

\begin{definition}\textit{Coexistence constraint}\cite{Jovicic_CR}: Under CR's own transmission,
there exists a sequence of $(2^{nR_1} , n)$-codes that can be decoded with a single-user decoder
at the primary receiver with vanishing error probability as $n\rightarrow\infty$.
\end{definition}
Thus such coexisted secondary systems can increase the spectral
efficiency in a predefined time/frequency/
geo-location slot of many
of the state-of-the-art wireless communication
standards/technologies and the design of such secondary systems is
worth investigating. The way to attain the coexistence is explained
in the following. SU-TX can overhear the signal transmitted by the
base station due to the wireless characteristic. As long as the
channel between the base station to the SU-TX is no worse than that
between the base station and PU-TX, this signal can be decoded
successfully at the SU-TX prior to the transmission of PR-TX. Thus
this signal can be treated as a non-causally known side information
at SU-TX. The importance of the non-causality are twofold. First,
SU-TX can use this non-causally known signal to do the decode and
forward relay to maintain the \!SINR at PU-RX, while SU-TX is simultaneous transmitting his own signal by the same spectrum geographically used by PU.
Second, with the knowledge of the non-causal side information, SU-TX
can use the advanced coding schemes such as DPC
\cite{Jovicic_CR}\!\!\cite{Costa_DPC}\!\cite{pslin_CR} when there is full
channel state information at the \!transmitter (CSIT) or
Gelfand-Pinsker coding \!\!\cite{GPC}\cite{PSLIN_CRRX2} for the partial CSIT
\cite{pslin_CR} case. This kind of CR is coined
by the \!\textit{interference mitigation CR} \!\!\cite{devroye_CR}, with the most commonly used \textit{interference avoidance CR},
are rigorously defined in the following
\begin{definition}\textit{Interference mitigation based CR}\cite{devroye_CR}:
With the non-causal knowledge of PU, the CR can simultaneously
operate in the spectrum occupied by PU without violating the
coexistence constraint.
\end{definition}
\begin{definition}\textit{Interference avoidance based CR}\cite{devroye_CR}: CR systems are active only when the PU is
silent.\\
\end{definition}
\begin{figure}
\centering \epsfig{file=./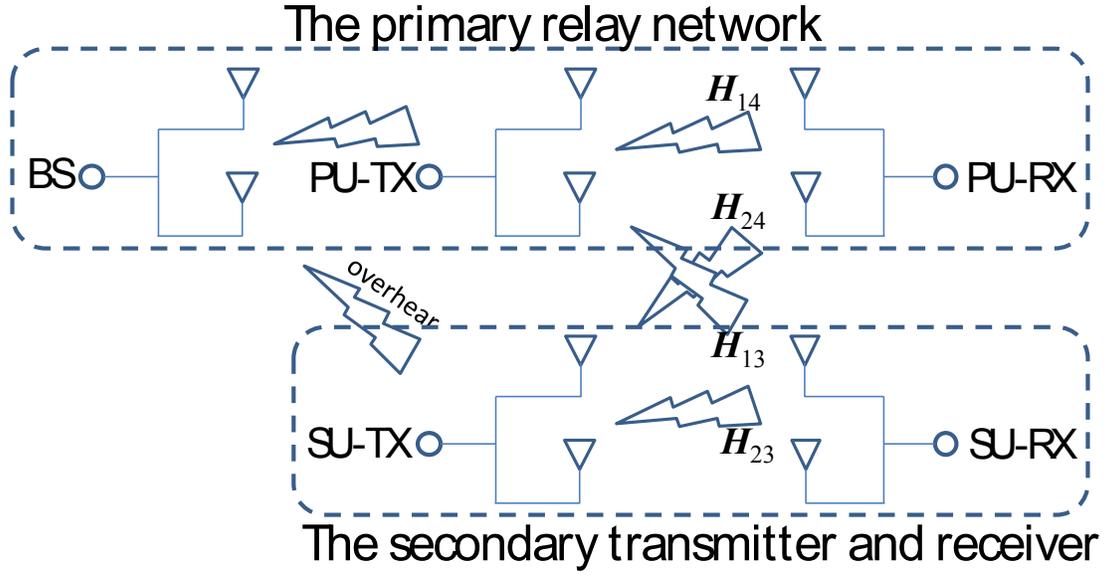 , width=.8\textwidth,
angle=0} \caption{The considered cognitive channel.} \label{Fig_sys}
\end{figure}
Although DPC is capacity achieving for the interference mitigation based CR channel \cite{Jovicic_CR}, the complexity
\cite{SC_JSAC}\cite{Erez_DPC_code_design} prohibits its use in practical systems widespreadly. Considering the implemental feasibility,
in this paper we design the interference mitigation CR by exploiting
the Tomlinson-Harashima precoding (THP) \cite{THP1}\cite{THP2}
instead of DPC. In this paper we jointly design the nonlinear
precoder, the transmitter and receiver beamformers, and the relay
matrix to implement the interference mitigation CR. Note that such
joint design exploits all the resources of side information at
transmitter, spatial diversity, and power.

The received signals at SU and PU's receivers can be respectively
represented as\footnote{In this paper, lower and upper case bold
alphabets denote vectors and matrices, respectively. The $i$th element of vector $\bm a$ is
denoted by $a_i$. And the element at the $i$th row and $j$th column
of the matrix $\bm A$ is $a_{ij}$. The superscript $(.)^T$ and
$(.)^H$ denotes the transpose and transpose complex conjugate. The
superscript $A^{(k)}$ denotes the value of $A$ in the $k$th
iteration. vec$(\bm A)$ stacks the columns of $\bm A$ as a super
vector. $|\bm A|$ denotes the determinant of $\bm A$. $\bm I_n$
denotes the $n$-dim identity matrix. Re$\{x\}$ and Im$\{x\}$ are the real and imaginary parts of the complex signal $x$, respectively. $A\otimes B$ denotes the Kronecker product. $(x)^+=\max\{0,x\}$.}
\begin{align}
\bm y_S&={\bm H_{23}}\bm x_S+{\bm H_{13}}\bm {x}_{P}+\bm n_S, \label{EQ_main}\\
\bm y_P&={\bm H_{14}}\bm {x}_{P}+{\bm H_{24}}\bm x_S+\bm n_P, \label{EQ_main2}
\end{align}
where $\bm x_S\in\mathds{C}^{N_T}$ is SU's transmit signal, $\bm
{x}_P\in CN(\bm 0, \Sigma_{\bm x_P})$ is PU's transmit signal where $\Sigma_{\bm x_P}\in\mathds{C}^{N_T\times N_T}$; $\bm H_{23}$ and
$\bm H_{14}$ are the channels between the transmitters of the SU and
PU to their designated receiver, respectively; $\bm
H_{24}$ and $\bm H_{13}$ are the channels between SU's transmitter
to PU's receiver and PU's transmitter to SU's receiver,
respectively; $\bm n_S\sim CN(\bm 0, \bm I)$ and $\bm n_P\sim CN(\bm 0, \bm I)$ are the circularly symmetric
complex additive white Gaussian noises at
receivers of SU and PU's networks, respectively. All channel gains
in this paper are assumed to be static. We also assume that all
receivers perfectly know their channel gains respectively, which can
be easily attained by the training process. There are average power
constraints for both SU and PU's networks respectively
\begin{align}
\mbox{tr}(E\left[\bm x_{S}\bm x_{S}^H\right])\leq P_T, \mbox{ and tr}(E\left[\bm {x}_{P}\bm{x}_{P}^H\right])\leq P_P. \label{EQ_SU_power_constraint}
\end{align}
The detailed functions of the secondary transmitter and receiver in
Fig. \ref{Fig_sys} are shown in the upper part of Fig. \ref{Fig_sys2}.
\begin{figure}
\centering \epsfig{file=./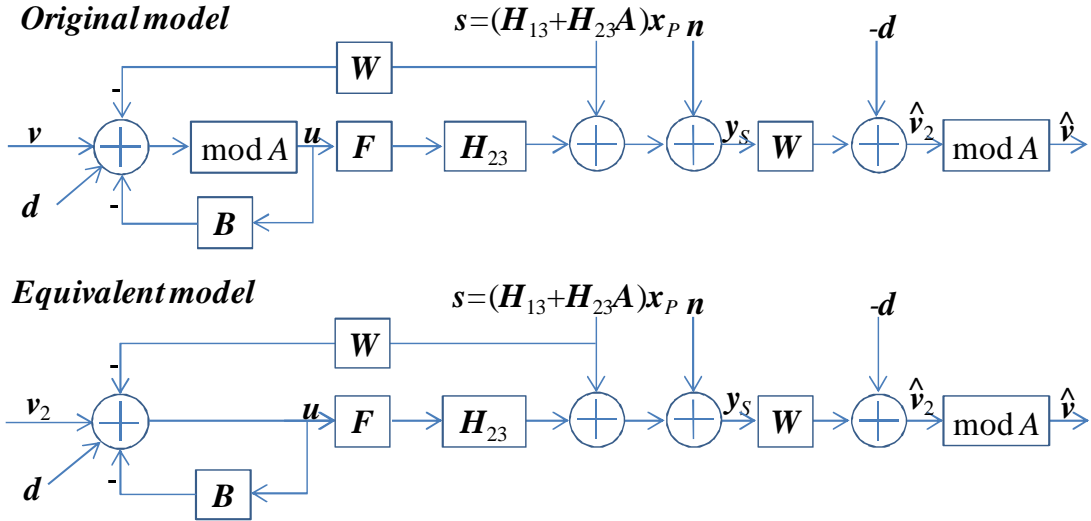 , width=.8\textwidth,
angle=0} \caption{The designed transmitter and receiver.}
\label{Fig_sys2}
\end{figure}
The channel input and output of the secondary systems can be respectively represented by
\begin{align}
\bm x_S &= \bm F\bm u+\bm A\bm x_P,\notag\\
\bm y_S &=\bm H_{23}\bm x_S+\bm H_{13}\bm x_P+\bm n_S=\bm
H_{23}\bm F\bm u+\bm s+\bm n_S,\label{EQ_SU_RX}
\end{align}
where $\bm F$ is the transmitter beamformer at SU-TX to transmit his own signal $\bm u$ and $\bm A$ is the relay matrix for relaying PU's signal $\bm x_P$.

\section{The pairing problem in a multiuser network}\label{Sec_pairing}
To accommodate the commercial systems, we assume that the PU-TXs are served by the base station with the orthogonal multiple access schemes such as the TDMA, FDMA, CDMA, SDMA, etc. We also assume that the transmissions among different PU-TX and PU-RX pairs use the same orthogonal multiple access scheme. And we assume the transmissions of the BS to PU-TXs and also the PU-TXs to PU-RXs are in different time slots. To consider the performance of SUs in a single cell such as the total throughput of the cell or the quality of services (QoS)\footnote{The reason to consider only the QoS of the secondary users is that the primary users's rates are not changed due to the coexistence constraint}, where several SUs coexist with PUs, to design a scheme properly pairing SUs to PUs is very critical to the system performance of SUs. For example, an improper pairing may cause the SU-TX to waste large portion of his power to relay PU's signal and results in a low SU's rate. In this section we first derive the approximate relay ratio in terms of the channel information and the transmit power, which is critical to the objective to be optimized. Then we formulate an integer programming problem which describes the considered pairing problem. After relaxing we propose an algorithm to numerically solve it. Note that in calculating the objective of the pairing problem, the proposed pairing scheme does not need to know all system parameters which should be designed as shown in Sec. \ref{Sec_practical_design}. Thus the design process is highly simplified.

\subsection{The metric of pairing}
In this section we will first give some observations from the SISO case. After that we will extend this relation to the MIMO case.
\subsubsection{The observation from the SISO case}
Recall that the coexistence constraint of the SISO case \cite{Jovicic_CR}\cite{pslin_CR} can be described by
%\begin{align}
%\frac{|h_{14}x_P+h_{24}\sqrt{\frac{\alpha P_T}{P_P}}x_P|^2}{P_N+|h_{24}|^2(1-\alpha)P_C}=\frac{|h_{14}|^2 P_P}{P_N}.
%\end{align}
%Then with the change of variables $P_1^{1/2}\triangleq|h_{14}|\sqrt{P_P}$, $P_2^{1/2}\triangleq|h_{24}|\sqrt{P_C}$, $P_3^{1/2}\triangleq|h_{23}|\sqrt{P_C}$, and some arrangements,
%we have
%\begin{align}
%(P_1P_2+P_2P_N)\alpha+2P_1^{1/2}P_2^{1/2}P_N\sqrt{\alpha}-P_1P_2=0,
%\end{align}
%and $\alpha$ can be solved as
%\begin{align}\label{EQ_alpha}
%\alpha=P_1P_2\left(\frac{-1+\sqrt{1+SNR_2+SNR_1SNR_2}}{(1+SNR_1)SNR_2}\right)^2.
%\end{align}
\begin{align}
\frac{\left|h_{14}x_P+h_{24}\sqrt{\frac{\alpha P_T}{P_P}}x_P\right|^2}{1+|h_{24}|^2(1-\alpha)P_T}=|h_{14}|^2 P_P,
\end{align}
where the left and right hand side are the signal to interference and noise ratio under SU is active and inactive, respectively. Note that the noise variances at both PU and SU RXs are set as unity without loss of generality. Also note that Gaussian signaling is used for PU and SU.
Then with the change of variables $P_1\triangleq|h_{14}|^2 P_P$, $P_2\triangleq|h_{24}|^2 P_T $, $P_3\triangleq|h_{23}|^2 P_T$, and some arrangements,
%we have
%\begin{align}
%(P_1P_2+P_2)\alpha+2P_1^{1/2}P_2^{1/2}\sqrt{\alpha}-P_1P_2=0,
%\end{align}
%and
$\alpha$ can be solved as
\begin{align}\label{EQ_alpha}
\alpha=\frac{P_1}{P_2}\left(\frac{-1+\sqrt{1+P_2+P_1P_2}}{1+P_1}\right)^2.
\end{align}

In the following we show how $h_{14}$, $h_{24}$, $P_P$, and $P_T$ affect SU's rate through $\alpha$. In
Fig. \ref{Fig_pairing_SISO_SU_rate_P1_PTfixed} we first show SU's rate versus $P_1$ with different
$P_2$'s and fixed $P_T$, i.e., the increasing of $P_2$ is due to the increasing of $|h_{24}|$.
It can be seen that SU's rate decreases with increasing $|h_{24}|$ and also with increasing $P_1$.
In the same figure we also fix $|h_{24}|$, i.e., the increasing of $P_2$
is due to the increasing of $P_T$. We can see that SU's rate increasing with increasing $P_T$ when $|h_{24}|$
is fixed. Thus we may summarize that except increasing $P_T$, increasing either $h_{14}$, $h_{24}$, or $P_P$
results in decreasing SU's rate.
\begin{figure}
\centering \epsfig{file=./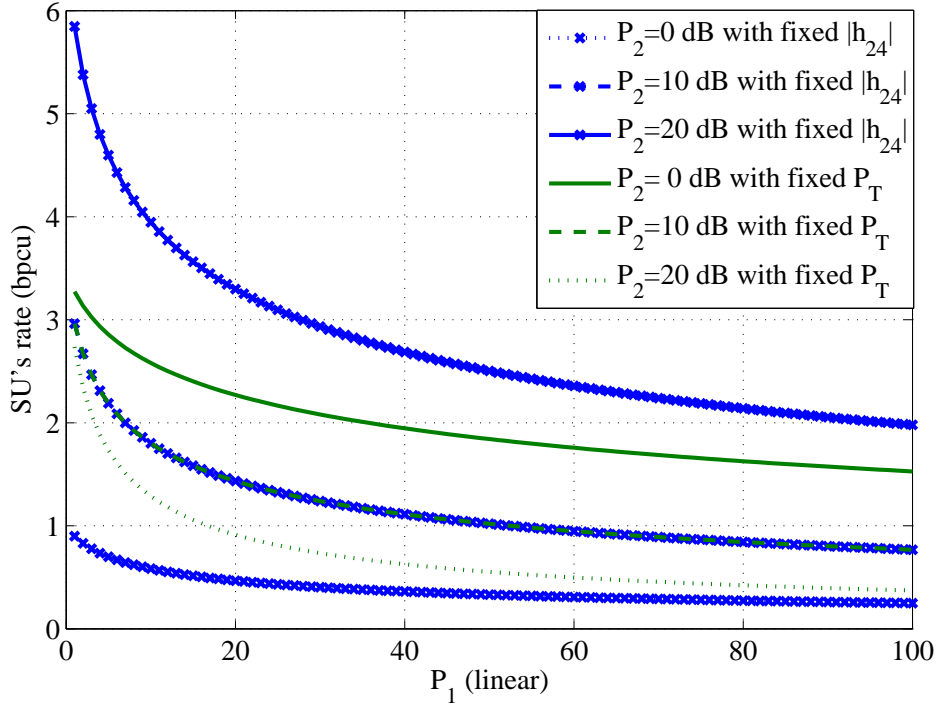 ,
width=.8\textwidth, angle=0} \caption{Cases of SU's rate versus $P_1$ when
$P_T$ is fixed and when $h_{24}$ is fixed.} \label{Fig_pairing_SISO_SU_rate_P1_PTfixed}
\end{figure}
\subsubsection{The extension to the MIMO case}
In the MIMO case, there is no simple trichotomy law between two channels like that in the SISO case. So before solving the pairing problem, first we need to extract the useful properties in vector case
from the matrices $\bm H_{14}$ and $\bm H_{24}$. An intuitive way is to consider the \textit{spatial correlation} of the two subspaces represented by $\bm H_{14}$ and $\bm H_{24}$, which are the only two channels affecting the relay ratio as \eqref{EQ_alpha}. For example, the principal angle \cite{Golub_principal_angle} and the chordal distance \cite{Conway_chordal_distance}.
\begin{definition}\label{Def_principal_angle}
(Principal Angle): For any two nonzero subspaces $V,\,W\subseteq \mathds{C}^n$, the principal angles between $V$ and $W$ are recursively defined to be the numbers $0\leq \theta_{i}\leq \pi/2$ such that
\begin{align}
\cos\theta_{i}&=\max_{\bm v\in V_{i},\,\bm w\in W_{i},\,||\bm v||_2=||\bm w||_2=1}\bm v^H\bm w,\,\,i=1,\,\cdots,\,\min\{dim(V),dim(W)\},
\end{align}
$\bm v_{i}$ and $\bm w_{i}$ are the vectors that construct the $i$th principal angle $\theta_{i}$, $||\bm v_{i}||_2=||\bm w_{i}||_2=1$, $V_{i}=V_{i-1}\cap\bm v_{i-1}^{\perp}$ and $W_{i}=W_{i-1}\cap\bm w_{i-1}^{\perp}$. Furthermore, $\theta_{min}=\theta_{1}\leq \cdots \leq  \theta_{max}$.
\end{definition}

\begin{definition}\label{Def_chordal_dist}
(Chordal distance): The chordal distance between $V$ and $W$ with dimensions $p$ and $q$, respectively, is defined in terms of the principal angles $\theta_{i}$ as
\begin{align}
d_c(U,V) =\sqrt{\Sigma_{i=1}^{\min\{p,q\}}\sin^2\theta_{i}}.
\end{align}
\end{definition}
%However, from the observation of the SISO case, we can see that only the angles between the subspaces are not sufficient, i.e., the magnitudes of them are also important. Thus we may modify the principal angle and chordal distance by multiplying the norms of the two channel matrices.
%A way to directly take the magnitude into account is by considering the volume of the projection of the two matrices
%\begin{align}
%\mbox{Vol}(\bm A)=\sqrt{\det(\bm A\bm A^H)},
%\end{align}
%where $\bm A= \bm H_{14}\bm H_{24}$. Note that $\mbox{Vol}(\bm A)$ is the volume of the paralellepiped defined by the column vectors of $\bm A$ and can be seen as an implicit measure of the orthogonality between the column vectors of $\bm A$.

In the following we investigate how the spatial correlation between the two channels $\bm H_{14}$ and $\bm H_{24}$ and also the transmit powers $P_P$ and $P_T$ affect the relay ratio. The proof is given in Sec. \ref{Proof_Theo_chordal}. Later we use this relation to formulate the pairing problem.

\begin{Theo}\label{Th_chordal}
To satisfy the coexistence constraint, the relay ratio and other system parameters approximately follow
\begin{align}\label{EQ_alpha_appr}
\alpha_{a}=\left(\frac{d_{24,min}^2\mbox{tr}(\bm D_{14}^2)-\frac{1}{P_T}}{d_{24,max}^2\left(\frac{1}{\lambda_{max}(\Sigma_{\bm x_P})} (d_c^2\mbox{tr}(\bm D_{14}^2)+N_T) +d_{24,min}^2\mbox{tr}(\bm D_{14}^2)\right)}\right)^+
\end{align}
where $d_c$ is the chordal distance between the eigen-spaces of $\bm H_{14}$ and $\bm H_{24}$.\\
\end{Theo}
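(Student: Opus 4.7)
The plan is to mirror the SISO derivation: write down the MIMO analogue of the SINR-equality coexistence condition displayed just before \eqref{EQ_alpha}, and then collapse the resulting matrix expression to the scalar form of \eqref{EQ_alpha_appr} through SVD-based bounds. Letting $\bm A$ be the SU relay matrix with $\tr(\bm A\Sigma_{\bm x_P}\bm A^H)=\alpha P_T$ and letting SU's private signal use the remaining power budget so that $\tr(\bm F\bm F^H)=(1-\alpha)P_T$, the PU-RX observation \eqref{EQ_main2} reads $\bm y_P=(\bm H_{14}+\bm H_{24}\bm A)\bm x_P+\bm H_{24}\bm F\bm u+\bm n_P$. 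The matrix analogue of the SISO equality therefore asks that $\tr\bigl((\bm H_{14}+\bm H_{24}\bm A)\Sigma_{\bm x_P}(\bm H_{14}+\bm H_{24}\bm A)^H(\bm I_{N_T}+\bm H_{24}\bm F\bm F^H\bm H_{24}^H)^{-1}\bigr)$ match $\tr(\bm H_{14}\Sigma_{\bm x_P}\bm H_{14}^H)$.

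Key steps, in order. First I would substitute the SVDs $\bm H_{14}=\bm U_{14}\bm D_{14}\bm V_{14}^H$ and $\bm H_{24}=\bm U_{24}\bm D_{24}\bm V_{24}^H$ and rewrite every matrix quantity in terms of the singular values $\bm D_{14},\bm D_{24}$ and the unitary coupling $\bm U_{24}^H\bm U_{14}$, which encodes the principal angles between the two left singular subspaces. Second, I would invoke Definition \ref{Def_chordal_dist} to replace $\sum_i\sin^2\theta_i$ by $d_c^2$, which turns the geometric-mismatch part of the cross term $\tr(\bm H_{14}\Sigma_{\bm x_P}\bm A^H\bm H_{24}^H)$ into a contribution of the form $d_c^2\,\tr(\bm D_{14}^2)$; the additive $N_T$ in the numerator of the inner bracket of \eqref{EQ_alpha_appr} comes from the noise covariance $\bm I_{N_T}$, and the factor $1/\lambda_{\max}(\Sigma_{\bm x_P})$ enters when normalizing $\bm A\Sigma_{\bm x_P}\bm A^H$ against the strongest mode of the PU input covariance. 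Third, I would apply the Rayleigh-quotient bounds $d_{24,\min}^2\,\bm I\preceq\bm H_{24}^H\bm H_{24}\preceq d_{24,\max}^2\,\bm I$ to upper-bound the SU-interference quadratic form $\tr(\bm H_{24}\bm F\bm F^H\bm H_{24}^H)$ by $d_{24,\max}^2(1-\alpha)P_T$ and to lower-bound the constructive contribution entering through $\bm H_{24}\bm A$ by a quantity proportional to $d_{24,\min}^2\,\tr(\bm D_{14}^2)$. Fourth, I would collect terms to obtain a scalar inequality that is affine in $\alpha$, solve for $\alpha$, normalize the SU power constraint \eqref{EQ_SU_power_constraint} to expose the $1/P_T$ term in the numerator, and finally apply $(\cdot)^+$ so that $\alpha_a\ge 0$ is enforced in the regime where the nominal value would turn negative (that is, where the coexistence cannot be met by any relay/own-signal split).

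The hard part will be Step 2. In the SISO equation the coupling between $h_{14}$ and $h_{24}$ reduces to a single complex phase and the coexistence condition collapses into a clean quadratic in $\alpha$; in the MIMO setting the analogous coupling is the entire unitary $\bm U_{24}^H\bm U_{14}$, which cannot be summarized by a single scalar without loss. The chordal distance is the natural scalar proxy because it is precisely the Grassmannian separation between the two eigen-spaces, but embedding it into the coexistence condition forces one to replace the exact cross-term $\tr(\bm H_{14}\Sigma_{\bm x_P}\bm A^H\bm H_{24}^H)$ by a principal-angle surrogate. That replacement is the source of the word \emph{approximate} in the statement, and it is also why the final expression mixes $d_c$, $\tr(\bm D_{14}^2)$, and the extremal singular values of $\bm H_{24}$ in the particular combination shown. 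Carrying this out cleanly while preserving the correct SISO limit and the non-negativity of $\alpha_a$ is the technical crux of the theorem.
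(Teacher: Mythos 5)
Your overall plan—reduce the matrix coexistence condition to a scalar relation in $\alpha$ via the SVDs of $\bm H_{14}$ and $\bm H_{24}$, summarize the unitary coupling $\bm U_{24}^H\bm U_{14}$ by the chordal distance, bound the remaining spectra by $d_{24,\min}$ and $d_{24,\max}$, and solve an affine equation in $\alpha$—matches the spirit of the paper's proof. But the execution diverges in ways that would not land on \eqref{EQ_alpha_appr}. First, the starting point is different: the paper does not posit a trace/SINR equality. It keeps the log-det coexistence constraint \eqref{EQ_coexistence_constraint3}, passes to traces via $\det(\bm M)\leq(\mbox{tr}(\bm M)/m)^m$ and $\mbox{tr}(\bm P\bm Q\bm P^H)\leq\lambda_{max}(\bm Q)\mbox{tr}(\bm P\bm P^H)$, and then equates the resulting \emph{upper bound} of the LHS with the analogous upper bound \eqref{EQ_PU_rate_UB} of the PU-alone rate. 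Your postulated trace equality is a different condition that is never derived from the constraint the theorem is about. Second, and more substantively, the cross term $\mbox{tr}(\bm A^H\bm H_{24}^H\bm H_{14})+\mbox{tr}(\bm H_{14}^H\bm H_{24}\bm A)$ is linear in $\bm A$ and sign-indefinite, so no Rayleigh-quotient or principal-angle "surrogate" can be applied to it directly. The paper's key device is the completion $|\mbox{tr}(\bm A^H\bm H_{24}^H\bm H_{14})-1|^2\geq 0$ followed by Cauchy–Schwarz, which converts the linear term into $\mbox{tr}(\bm A\bm A^H)\,\mbox{tr}(\bm H_{24}^H\bm H_{14}\bm H_{14}^H\bm H_{24})$; it is \emph{this} product whose $\bm U_{24}^H\bm U_{14}$ factor yields $d_c^2$. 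Your proposal is silent on how the sign-indefinite linear term is controlled, which is the actual crux.

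Two further mismatches would prevent you from recovering the stated formula. The directions of your extremal-singular-value bounds are reversed: the SU own-signal interference $\bm H_{24}\bm F\bm F^H\bm H_{24}^H$ sits inside the inverse $\bm Q=(\bm I+\bm H_{24}\bm F\bm F^H\bm H_{24}^H)^{-1}$, so the paper must \emph{lower}-bound it (giving the $(1-\alpha)P_T d_{24,\min}^2$ term that produces $d_{24,\min}^2\mbox{tr}(\bm D_{14}^2)$ in \eqref{EQ_alpha_appr}), while the relay and cross terms in the numerator are \emph{upper}-bounded with $d_{24,\max}^2$; you propose the opposite assignment, which would swap the roles of $d_{24,\min}$ and $d_{24,\max}$ in the result. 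Finally, the additive $N_T$ does not come from the noise covariance: it arises from bounding the relay self-interference via $\mbox{tr}(\bm H_{24}\bm A\bm A^H\bm H_{24}^H)\leq\mbox{tr}(\bm A\bm A^H)\mbox{tr}(\bm D_{24}^2)$ together with $\mbox{tr}(\bm D_{24}^2)\leq N_T d_{24,\max}^2$. You would also need the paper's closing assumptions ($\bm F=\sqrt{(1-\alpha)P_T}\tilde{\bm F}$ with $\tilde{\bm F}$ unitary, and reabsorbing $\lambda_{max}(\Sigma_{\bm x_P})$ into $\mbox{tr}(\bm A\Sigma_{\bm x_P}\bm A^H)=\alpha P_T$) to make the relay ratio appear at all.
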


\begin{rmk}
Note that for the tractability, we assume the signal distribution at SU-TX is Gaussian, which will be explained in more detailed in Remark \ref{Remark_worst_noise}.
\end{rmk}
\begin{rmk}\label{remark_alpha} From Theorem \ref{Th_chordal} we can find that the chordal distance, the eigenvalue spread of $\bm H_{24}$, the strength of the subspaces $\mbox{tr}(\bm D_{14}^2)$, and also the transmit power are used to characterize the relation with $\alpha$ for satisfying the coexistence constraint. The interplay between these parameters are discussed in the following. From \eqref{EQ_alpha_appr} it is clear that to make PU's rate fixed after SU is active, the relay ratio is approximately inversely proportional to the square of chordal distance. This result is intuitively reasonable, i.e., when the distance between the two subspaces is larger, SU can use less relaying power to keep PU's rate unchanged. As an extreme example, when $\bm H_{24}$ is orthogonal to $\bm H_{14}$, SU-TX does not need to relay since no SU's signal will be received by PU-RX under the assumption that PU-RX uses the left singular vectors of $\bm H_{14}$ as the beamformer. Note that when $d_{24,min}\rightarrow 0$, $\alpha\rightarrow 0$. This is reasonable since there may exist directions causing much less interferences to PU's RX due to the small eigenvalues, and SU's TX can transmit his own signal in such directions to reduce the relay ratio, i.e., the power for transmitting his own signal can be increased. In contrast, when $d_{24,max}^2$ is small, there may be no directions with relatively small enough eigenvalues for SU's own transmission to reduce the interference to PU's RX. And thus SU wastes more power on relaying, i.e., $\alpha_a$ increases. Note also that PU's transmit power has the same effect to $\alpha_a$ as that in the SISO case. It can easily be seen that $\alpha_a$ is proportional to $\lambda_{max}(\Sigma_{\bm X_P})$, which is consistent to the SISO case by the fact that $\lambda_{max}(\Sigma_{\bm X_P})$ may increase with increasing $P_P$.
\end{rmk}
%\textit{Remark:} Note that in \eqref{EQ_proportiona} we do not
%consider $\bm H_{24}^4$. The reason is that {\red sinfce
%$\lambda_{max}(\bm Q)$ is an implicit function of $\mbox{tr}(\bm
%H_{24}^4)$, we may not be able to explicitly describe their
%relation.} Note also that both PU and SU's transmit powers affect
%$\alpha$ in the MIMO case, which is consistent to the SISO case.
\subsection{The pairing problem formulation and solving}
Based on the relation in Theorem \ref{Th_chordal}, we can form the following optimization problem with only the knowledge of the channel matrices and the transmit power, but no need to know other system parameters such as the precoder, transmit and receive beamformer, etc., which can not be attained before the pairing. Note that we use $(i,j)$ to denote the \textit{PU-SU pair}, i.e., the $i$th SU transceiver is paired
to the $j$th PU transceiver, and $\alpha_{a,ij}$ is the approximate relay ratio of the pair $(i,j)$. To simplify the notation, we use $\bm H_{23,i}$, $\bm F_i$, and $\bm W_i$ to respectively denote the channel between the SU-TX and SU-RX, the transmit and receive beamformer, where the PU-SU pair is formed by the $i$th SU transceiver.

\begin{coro}
The maximization of the sum rate of the SU's network after pairing can be approximated by
\begin{align}
\underset{\{(i,j)\}}\max\mathop{\Sigma}_{i=1}^M\mathop{\Sigma}_{k=1}^{N_T}\log\left(1+\left(1-\alpha_{a,ij}\right)P_{T_{i}}\lambda_{23,ik}^2\right),\label{EQ_Pairing_opt1}
\end{align}
where there are $M$ SU and PU transceivers, respectively, and each node has $N_T$ antennas.
\end{coro}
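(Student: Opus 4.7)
The plan is to derive the objective \eqref{EQ_Pairing_opt1} as the natural upper-bound proxy for SU sum-rate that can be evaluated at the pairing stage, using only channel matrices and transmit powers. Because the transceiver optimization in Section \ref{Sec_practical_design} has not yet been carried out, we cannot use the true post-design rate; we instead invoke Theorem \ref{Th_chordal} to replace the (unknown, design-dependent) relay ratio by $\alpha_{a,ij}$, and treat the interference from the primary signal $\bm H_{13}\bm x_P$ as fully removed by the nonlinear precoder as in \eqref{EQ_SU_RX}.

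First I would fix a candidate pairing $(i,j)$ and write SU $i$'s transmit signal as $\bm x_{S,i}=\bm F_i\bm u_i+\bm A_i\bm x_P$. The total power constraint in \eqref{EQ_SU_power_constraint} together with the coexistence-enforced relay allocation means that a fraction $\alpha_{a,ij}$ of $P_{T_i}$ is consumed by the relaying term $\bm A_i\bm x_P$, leaving $(1-\alpha_{a,ij})P_{T_i}$ for the information-carrying term $\bm F_i\bm u_i$. Substituting $\alpha_{a,ij}$ from Theorem \ref{Th_chordal} eliminates dependence on $\bm A_i$ entirely.

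Second, since the interference mitigation CR encodes $\bm u_i$ with THP treating the known side information $\bm H_{13,i}\bm x_P$ as in \cite{Jovicic_CR,caire_channel_with_SI}, I would approximate the effective received signal at SU-RX $i$ by $\bm H_{23,i}\bm F_i\bm u_i+\bm n_S$, which is a standard MIMO point-to-point channel. Taking the SVD $\bm H_{23,i}=\bm U_i\mathrm{diag}(\lambda_{23,ik})\bm V_i^H$, choosing $\bm F_i$ aligned with $\bm V_i$ and using $\bm U_i^H$ as the receive beamformer decomposes the link into $N_T$ parallel AWGN subchannels with gains $\lambda_{23,ik}^2$. Assigning the available power $(1-\alpha_{a,ij})P_{T_i}$ on each eigenmode (ignoring the water-filling normalization across streams, consistent with the Gaussian-signalling approximation noted in the remark following Theorem \ref{Th_chordal}) yields per-subchannel rate $\log\bigl(1+(1-\alpha_{a,ij})P_{T_i}\lambda_{23,ik}^2\bigr)$. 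Summing over $k=1,\ldots,N_T$ and over all $M$ paired SUs produces the claimed objective, and maximizing over the bijective assignment $\{(i,j)\}$ gives \eqref{EQ_Pairing_opt1}.

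The main obstacle is justifying the replacement of the true achievable rate (which depends on the still-undesigned nonlinear precoder, the receive beamformer, the relay matrix $\bm A_i$, and a proper water-filling power split across streams) by the simple separable form in \eqref{EQ_Pairing_opt1}. This is absorbed into the word "approximately": the two key simplifications are (i) using $\alpha_{a,ij}$ from Theorem \ref{Th_chordal} in place of the exact coexistence-compliant relay ratio, and (ii) using per-eigenmode power $(1-\alpha_{a,ij})P_{T_i}$ rather than distributing it. Both are monotone in the same quantities (channel strengths $\lambda_{23,ik}^2$ and remaining power $(1-\alpha_{a,ij})P_{T_i}$) as the true rate, so they preserve the correct ordering of candidate pairings, which is all that a pairing metric needs.
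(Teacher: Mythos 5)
Your proposal is correct and follows essentially the same route as the paper: the paper likewise treats the THP-cancelled link as a point-to-point MIMO channel, normalizes $\bm F_i$ so that the factor $(1-\alpha_i)P_{T_i}$ is pulled out of the log-det, fixes $\bm W_i=\bm U_{23,i}$ and $\tilde{\bm F}_i=\bm V_{23,i}/\sqrt{N_T}$ to diagonalize $\bm H_{23,i}$ into parallel eigenmodes, and then invokes Theorem \ref{Th_chordal} to replace $\alpha_i$ by $\alpha_{a,ij}$. Your explicit acknowledgment of the equal-power (non-water-filled) allocation across streams matches the paper's own $\simeq$ step, so no substantive difference remains.
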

\begin{proof}
We may represent the sum rate of the SU's network as
\begin{align}
\mathop{\Sigma}_{i=1}^M\log|\bm I + \bm W_i\bm H_{23,i} \bm F_{i}\bm F_{i}^H\bm H_{23,i}^H\bm W_i^H|\overset{(a)}=& \mathop{\Sigma}_{i=1}^M\log|\bm I + (1-\alpha_i)P_{T_i}\bm W_i\bm H_{23,i} \tilde{\bm F}_{i}\tilde{\bm F}_{i}^H\bm H_{23,i}^H\bm W_i^H|\notag\\
\overset{(b)} \simeq & \mathop{\Sigma}_{i=1}^M\log(|\bm I+(1-\alpha_{i})P_{T_{i}}\Lambda_{23,i}^2|)\notag\\
\overset{(c)}\simeq & \mathop{\Sigma}_{i=1}^M\mathop{\Sigma}_{k=1}^{N_T}\!\log\!\left(\!1\!+\!\left(\!1\!-\alpha_{a,ij}\!\right)\!P_{T_{i}}\!\lambda_{23,ik}^2\!\right),
\end{align}
where in (a) we normalize $\bm F_k$ such that $\mbox{tr}(\tilde{\bm F}_k\tilde{\bm F}_k^H)=1$; in (b) we assume $\bm W_{i}=\bm U_{23,i}$, $\tilde{\bm F}_i=\bm V_{23,i}/\sqrt{N_T} $, where $\bm U_{23,i}$ and $\bm V_{23,i}$ are the left and right singular vectors of $\bm H_{23,i}$, i.e., $\bm H_{23,i}=\bm U_{23,i}\Lambda_{23,i}\bm V_{23,i}^H$. This selection of $\bm W_i$ and $\tilde{\bm F}_i$ is due to the fact that before the pairing operation, we do not know the exact $\{\bm H_{13,i}\}$, $\{\bm H_{23,i}\}$, $\{\bm H_{14,i}\}$, and $\{\bm H_{24,i}\}$. Thus we can not attain $\bm F$ and $\bm W$ by solving the optimization problem $\mathbf{P1}$ in Sec. \ref{Sec_practical_design}. And here we use this simple but reasonable choice to make the following derivation tractable; in (c) Theorem \ref{Th_chordal} is used.
\end{proof}

Now we can reform \eqref{EQ_Pairing_opt1} as an explicit optimization problem
\begin{align}
\underset{\{t_{ij}\}}\max & \mathop{\Sigma}_{i=1}^M\mathop{\Sigma}_{j=1}^M t_{ij} \mathop{\Sigma}_{k=1}^{N_T}\log\left(1+\left(1-\alpha_{a,ij}\right)P_{T_{i}}\lambda_{23,ik}^2\right),\,\,\,
s.t.  \mathop{\Sigma}_{i=1}^M t_{ij}=1,\,\,\forall\,\, j; \mathop{\Sigma}_{j=1}^M t_{ij}=1,\,\,\forall\,\, i; \,\,t_{ij}\in\{0,1\},\forall\,\, i,\,j,\notag
\end{align}
where the indicator $t_{ij}$ is 1 if the $i$th SU is paired with the $j$th PU. And the first and second constraints state that each PU is only paired by one SU, and each SU is only paired by one PU, respectively. Note that this problem is an integer programming one, which is computationally prohibited especially when $M$ is large. Thus we relax the third constraint as $t_{ij}\geq 0,\,\forall\,i,\,j$. In the following we resort to the dual method to solve the relaxed problem. By dualizing the first constraint, we have the Lagrangian as
\begin{align}
L(\bm T, \bm \mu)&\!=\!\mathop{\Sigma}_{i=1}^M\!\mathop{\Sigma}_{j=1}^M \!t_{ij} \! \mathop{\Sigma}_{k=1}^{N_T}\log\!\left(\!1\!+\!\left(\!1\!-\alpha_{a,ij}\!\right)P_{T_{i}}\!\lambda_{23,ik}^2\right)\!+\!
\mathop{\Sigma}^{M}_{j=1}\!\mu_j\left(\!1\!-\!\mathop{\Sigma}^{M}_{i=1}\!t_{ij}\!\right)\triangleq\! \mathop{\Sigma}_{i=1}^M\mathop{\Sigma}_{j=1}^M t_{ij} (R_{ij}-\mu_j)\!+\!\mathop{\Sigma}_{j=1}^M\mu_j,\label{EQ_X}
\end{align}
where $R_{ij}\triangleq \mathop{\Sigma}_{k=1}^{N_T}\log\!\left(\!1\!+\!\left(\!1\!-\alpha_{a,ij}\!\right)P_{T_{i}}\!\lambda_{23,ik}^2\right)$ and $\bm \mu\triangleq[\mu_1,\,\mu_2,\,\cdots,\,\mu_M]\in\mathds{R}^M$ are the dual variables and the $i$th row $j$th column of the indicating matrix $\bm T$ is $t_{ij}$. And the dual objective function is
\begin{align}
g(\bm \mu)=\underset{\bm T}\max\,L(\bm T,\bm \mu),\,\,\,s.t.\, \mathop{\Sigma}_{j=1}^M t_{ij}=1,\,\,\forall\,\, i; \,t_{ij}\geq 0,\,\forall\,i,\,j,\label{EQ_dual_obj}
\end{align}
and the dual problem is $\underset{\bm \mu}\min\,g\!(\!\bm \mu\!).$ Since $R_{i\!j}$ is independent of $\bm T$, we can solve the optimal $\bm T$ for \!\eqref{EQ_dual_obj}\! as
\begin{align}
t_{ij}^{opt}=\left\{\begin{array}{ll}1,&\,j^*=\underset{j=1,\cdots, M}{\arg\max} (R_{ij}-\mu_j)\\
0,&\mbox{otherwise},\end{array}\right.\label{EQ_pairing_from_dual}
\end{align}
for $i=1,\cdots, M$. In the final step we use the subgradient method \cite{Boyd_subgradient} to solve $\bm \mu$
\begin{align}
\mu_j^{(l+1)}=\mu_j^{(l)}-\gamma^{(l)}\left(1-\mathop{\Sigma}_{i=1}^M t_{ij}\right),\,j=1,\cdots,M,\label{EQ_subgradient}
\end{align}
where the superscript $(l)$ denotes the number of iteration and
$\{\gamma^{(l)}\}$ is the sequence of step sizes which should be
designed properly. With the new $\bm \mu$ in each iteration, the
subcarrier pairing can be updated from \eqref{EQ_pairing_from_dual}.
Note that \eqref{EQ_subgradient} will converge to the dual optimum
variables \cite{Boyd_subgradient}. The algorithm in Table
\ref{table_alg_pairing} summarized the above steps of solving $\bm
T$ and $\bm \mu$, where $S_j$ is the set of indices of non-zero
entries in the $j$th column of $\bm T$ and $Z$ is the set of indices
of columns with all zero entries of $\bm T$. Note that from line 6
to 13 we try solve the problem that some columns may be allocated
with more than one 1 by moving the additional 1s to the all zero
columns with least reduction of the objective.
 {\renewcommand{\baselinestretch}{1}
\begin{table}[ht]
\caption{The Proposed Algorithm for the optimal pairing of the PU and SU networks} \normalsize
\centering % used for centering table
\vspace{-0.5cm}\begin{tabular}{l} % centered columns (4 columns)
\line(1,0){460}\\ %inserts double horizontal lines
1: Initialize $\bm \mu^{(0)}$ \notag\\
2. \textbf{Repeat} \\
3: \hspace{0.5cm} Compute $\left\{R_{ij}^{(l)}\triangleq \mathop{\Sigma}_{k=1}^{N_T}\log\!\left(\!1\!+\!\left(\!1\!-\alpha_{a,ij}^{(l)}\!\right)P_{T_{i}}\!\lambda_{23,ik}^2\right)\right\}$\\
4: \hspace{0.5cm} Compute $\left\{t_{ij}^{opt,(l)}=1,\,j^*=\underset{j=1\cdots M}{\arg\max}\,\,(R_{ij}-\mu_j^{(l)});\,\,
t_{ij}^{opt,(l)}=0,\mbox{otherwise}\right\}$, for $i=1,\cdots, M$\\
5: \hspace{0.5cm} Compute $\mu_j^{(l+1)}=\mu_j^{(l)}-\gamma^{(l)}\left(1-\mathop{\Sigma}_{i=1}^M t_{ij}\right)$, for $j=1,\cdots, M$\\
6: \hspace{0.5cm} For $j=1\sim M$\\
7: \hspace{1cm}  $i^*=\arg\max_{i\in S_j}(R_{ij}-\mu_j)$, $S_j=S_j\setminus i^*$\\
8: \hspace{1cm}   While $|S_j|\geq 1$\\
9: \hspace{1.7cm}  $i^*=\arg\max_{i\in S_j}(R_{ij}-\mu_j)$, $k^*=\arg\min_{k\in Z}|\mu_k-\mu_j|$,\\
10:\hspace{1.7cm}   swap the entries $(i^*,j)$ and $(i^*,k^*)$\\
11:\hspace{1.7cm}   $S_j=S_j\setminus i^*$, $Z=Z\setminus k^*$\\
12: \hspace{0.9cm} End\\
13: \hspace{0.4cm} End\\
14: \textbf{Until} $||\bm \mu^{(l+1)}-\bm \mu^{(l)}||/||\bm \mu^{(l+1)}||<\epsilon$\\
15: Return $\{t_{ij}^{opt}\}$\\
\line(1,0){460}\\%inserts single line
\end{tabular}
\label{table_alg_pairing} % is used to refer this table in the text
\end{table}
}
\begin{rmk}\label{remark_complexity}
The complexity of brute force pairing is $O(M!)$. In contrast, the proposed scheme highly reduces the complexity to $O(M^2)$ for
calculating $R_{ij}$ in each iteration. Assume $T$ iteration is required for the algorithm to converge, the total complexity is
$O(TM^2)$, which is much more tractable and feasible, especially when $M$ is large.
\end{rmk}

\section{Practical design of the secondary systems}\label{Sec_practical_design}
After the pairing process discussed in the previous section, each SU transceiver knows the PU he should
coexist with and also all the channels in Fig. \ref{Fig_sys}. In this section we discuss our proposed scheme for solving the
relay matrix, the non-linear precoder, the transmitter and receiver beamformer. Considering the feasibility of implementation, we adopt the
THP as the precoder to mitigate the interference at the SU receiver instead of the dirty paper coding. The THP output for the $k$-th antenna can be
represented by
\begin{align}\label{EQ_u_k}
u_k=\left(v_k+d_k-\sum_{j=1}^{N_T} w_{kj}s_j -\sum_{j=1}^{k-1}b_{kj}
u_j\right)\mbox{ mod } A,\,\,k=1\sim N_T,
\end{align}
where we assume $v_k,\,\,k=1\sim N_T$ are mutually independent and selected from an $M$-QAM
constellation; the $k$th element of $\bm s$ is denoted by $s_k$ and $\bm s=(\bm H_{13}+\bm
H_{23}\bm A)\bm {x}_P$ is the non-causally known side information at
SU-TX and $\bm A$ is the relay matrix; the random dither $d_k\in\mathds{C}$ is at the $k$-th antenna which is known by both the transmitter and the receiver before the transmission, where Re$\{d_k\}\sim\mbox{unif}(\frac{-A}{2},\frac{A}{2})$ and Im$\{d_k\}\sim\mbox{unif}(\frac{-A}{2},\frac{A}{2})$. Note that due to the insertion of the dither, the
elements of $\bm u$ are uncorrelated, i.e., $E[\bm u\bm u^H]=\bm I$, and are uniformly distributed over
the Voronoi region \cite{Fischer_precoding} which is proved as the Crypto Lemma \cite{ShamaiMultibinning}. The modulo operation is as following. For a vector $\bm g$, the $\mbox{mod } A$ operation $\bm g \mbox{ mod }A = \bm g'$ is applied element-wise to each
element $g_i$ of $\bm g$ such that $g_i'= g_i
-Q_A(g_i),\,\,\forall i$, where $Q_A(g_i)$
is the nearest multiple of $A$ to $g_i$. And the Voronoi region of $A$ is set as $[-A/2,\,A/2)$. Note that $\bm B$ is a strict lower triangular matrix, i.e.,
$b_{kj}=0,\,\forall k\leq j$, to perform the successive interference cancelation, inherited from its dual operation, i.e., the decision feedback equalizer at the receiver. Without loss of generality, we can
stack $u_1,\cdots, u_{N_T}$ in \eqref{EQ_u_k} in the vector form
\begin{align}
\bm u=(\bm v+\bm d-\bm W\bm s-\bm B \bm u)\mbox{ mod } A=(\bm I+\bm B)^{-1}(\bm v+\bm j+\bm d-\bm W\bm s)\triangleq \bm
C^{-1}(\bm v_2+\bm d-\bm W\bm s),\label{EQ_u}
\end{align}
where the second equality is from \cite{Fischer_precoding} and $\bm v_2\triangleq\bm v+\bm j$, $\bm j$ is the modulo indices. With $\bm j$, we may
represent the considered model as an equivalent one without the
modulo at transmitter, as shown in the lower part of Fig.
\ref{Fig_sys2}. In addition,
assume a moderate or high $M$ is used, then the
 \textit{power loss} due to the modulo operation can be neglected.
Then the estimated signal after the dither removal at receiver can
be represented by
\begin{align}
\hat{\bm v}_2 &=\bm W\bm y_S - \bm d= \bm{W}(\bm H_{23}\bm F\bm u+\bm s+\bm
n)- \bm d.\label{EQ_mod_output}
\end{align}
From \eqref{EQ_u} and \eqref{EQ_mod_output}, the estimated error can
be described by
\begin{align}
\bm e&\triangleq\hat{\bm v}_2-
\bm v_2=\bm W(\bm H_{23}\bm F\bm u+\bm s+\bm n)-\bm d
- \bm C\bm u-\bm W\bm s+\bm d=(\bm W\bm H_{23}\bm F- \bm C)\bm u+\bm W\bm n.
\end{align}
Then the sum mean square error (SMSE) among all antennas is
\begin{align}\label{EQ_Cov_mat_E}
\mbox{tr}(\bm E)&=\mbox{tr}(E[\bm e\bm e^H])=\mbox{tr}\left((\bm W\bm H_{23}\bm F-\bm C)(\bm W\bm H_{23}\bm
F-\bm C)^H+\bm W\bm W^H\right)=||\bm m||^2_2,
\end{align}
where in the last equality we define $\bm m\triangleq (\mbox{vec}(\bm W\bm H_{23}\bm F-\bm C)^T,\,\,\mbox{vec}(\bm W)^T)^T$.

Here we use the SMSE in \eqref{EQ_Cov_mat_E} as the performance
metric, instead of the sum rate of SUs as in Sec. \ref{Sec_pairing}.
In fact, maximizing SMSE is closely related to maximizing the sum
rate as following shows, while the later may be intractable. Note
that the error covariance matrices of the minimum MSE-decision
feedback equalization (MMSE-DFE) and MMSE-THP are the same \cite{Li_MSE_SINR}, under
the assumption that the former has correct previous decisions while
the latter has negligible precoding loss and valid linear model
\cite{Shenouda_THP}. In addition, we know that for an MMSE-DFE
system,
\[
\max\,\,\mathop{\Sigma}_{k=1}^{N_T}\log(1+\textsf{SINR}_k)=\max\,\,-\log\mathop{\Pi}_{k=1}^{N_T} \textsf{MSE}_k\geq -N_T\log\left(\min\,\,\mbox{tr}(\bm E)/N_T\right),
\]
where the equality is from
$\textsf{SINR}_k=1/\textsf{MSE}_k-1,\,k=1,\cdots,N_T$
\cite{Li_MSE_SINR} and the inequality comes from arithmetic and
geometric means and also \eqref{EQ_Cov_mat_E}. From the above,
minimizing the sum \textsf{MSE} maximizes the lower bound of the sum
rate of the SU's networks. Therefore we may claim that these two
metrics are consistent even in $\mathbf{P0}$ we have additional
variable $\bm A$ and constraints compared to the traditional MMSE-THP problem.

Then we can
form the optimization problem as
\begin{align}
\mathbf{P0:}\!\!\!\!\min_{\bm A,\,\bm B,\,\bm F,\,\bm W}&\, ||\bm m||^2_2\label{EQ_MSE_opt}\\
\mbox{s.t.}&\,\,\log\!\!\frac{|\bm I \!\!+ \!\!\bm H_{24}\bm F\!\!\bm F^H\bm H_{24}^H \!\!+\!\!(\bm H_{14}\!\!+\!\!\bm H_{24}\bm A)\!\Sigma_{\bm x_P}\!(\bm H_{14}\!\!+\!\!\bm H_{24}\bm A)^H|}{|\bm I+\bm H_{24}\bm F\bm F^H\bm H_{24}^H|}\geq \log(|\bm I+\bm H_{14}\Sigma_{\bm x_P}\bm H_{14}^H|),\label{EQ_coexistence_constraint3}\\
&\,\, \mbox{tr}(\bm F\bm F^H)+\mbox{tr}(\bm A \Sigma_{\bm x_P} \bm A^H)\leq P_T,\label{EQ_power_constraint_final}\\
&\,\,\bm B\mbox{ is a strictly lower triangular matrix,}\label{EQ_triangular_constraint}
\end{align}
where \eqref{EQ_coexistence_constraint3} is the coexistence
constraint to guarantee that PU's rate is not decreased when the SU
is active and the left hand side (LHS) is the rate of PU when the
SU is active; \eqref{EQ_power_constraint_final} is the
total power constraint at the SU transmitter, which is the sum power of SU's own signal tr$(\bm F\bm F^H)$ and that of the
relay signal tr$(\bm A\Sigma_{\bm x_P}\bm A^H)$. By eigenvalue
decomposition we have $\Sigma_{\bm x_P}=\bm V\Lambda_{\bm x_P}\bm
V^H$, where $\bm V$ is selected as the right singular vector of  $\bm H_{14}$, i.e., $\bm H_{14}=\bm U\bm S\bm V^H$ by singular value decomposition; $\Lambda_{\bm
x_P}=\diag\{P_1,\,P_2,\,\cdots,\,P_{N_T}\}$ where $P_i=\left(\mu-\frac{1}{\lambda_i^2}\right)^+$ is the water-filling
result according to $\bm H_{14}$, $\lambda_i$ is the
$i$th singular value of $\bm H_{14}$ and $\mu$ is chosen such that
$\Sigma_{i=1}^{N_T} P_i=P_p$.

\begin{rmk}\label{Remark_worst_noise} Note that the output of THP is uniformly distributed per two-dimension but not Gaussian. So the distribution of the received
signal at PU-RX is the convolution of uniform and Gaussian
distributions. And thus the PU's achievable rate is not easy to derive compared to the one with Gaussian input. For the tractability, we lower bound PU's rate
when SU is active by the concept that Gaussian noise is the worst
noise when the signal is Gaussian \cite{Cover_book}, i.e., treating SU's signal as Gaussian distributed results in the lowest PU's rate. Therefore solving $\bm A$, $\bm B$, $\bm
F$, and $\bm W$ with this lower bounded
constraint is more conservative and is thus feasible for the original problem.\\
\end{rmk}
By epigraph and Schur's Complement Theorem
\cite{Horn_matrix_analysis} \footnote{$\left(
                                         \begin{array}{cc}
                                           \bm A & \bm B \\
                                           \bm B^H & \bm C \\
                                         \end{array}
                                       \right)\succeq\bm 0\Leftrightarrow \bm C-\bm B^H\bm A^{-1}\bm B\succeq \bm 0
$, where $\bm A\succ \bm 0$ and $\bm C$ is Hermitian.}, we can
transform $\mathbf{P0}$ as
\begin{align}
\mathbf{P1:}\!\!\!\!\min_{\bm A,\,\bm B,\,\bm F,\,\bm W,\,t_0}&\,\, t_0,\,\,\,
\mbox{s.t. }\eqref{EQ_coexistence_constraint3},\,\,\eqref{EQ_power_constraint_final},\,\,\eqref{EQ_triangular_constraint},\,\left(%
\begin{array}{cc}
  t_0 & \bm m^H \\
  \bm m & \bm I \\
\end{array}%
\right)\succeq \bm 0.\label{EQ_t0_LMI}
\end{align}

To proceed, in the first step we simplify
\eqref{EQ_triangular_constraint} and solve $\bm W$ in the way
without loss of optimality, which can be attained due to the fact that $\bm W$ is at SU-RX and is independent of all the constraints. After some manipulations, we can have
the following lemma. The proof is provided
 in Section \ref{Sec_proof_W_B}.\\

\begin{lemma}\label{Lemma_P1}
Given $\bm F$, we can solve $\bm W$ and $\bm B$ as
\begin{align}
\bm W&=\bm C\bm F^H\bm H_{23}^H(\bm H_{23}\bm F\bm
F^H\bm H_{23}^H+ \bm I)^{-1},\label{EQ_W}\\
\bm B &= (\Delta_1+\Sigma_{i=1}^{N_T}\bm S_i^T\bm \mu_i^*\bm e_i^T)(\bm I+\bm \Delta_2),\label{EQ_B_solution}\\
\Delta_1&\triangleq \bm F^H\bm H_{23}^H\!\!(\bm H_{23}\bm F\bm F^H\bm H_{23}^H\!\!+\!\! \bm I)^{-1}\bm H_{23}\bm F\!\!-\!\!\bm I,\label{EQ_Delta1}\\
\Delta_2&\triangleq \bm F^H\bm H_{23}^H\bm H_{23}\bm
F,\label{EQ_Delta2}
\end{align}
where $\bm\mu_i$, $i=1,\cdots, N_T$ are the Lagrange multipliers of \eqref{EQ_triangular_constraint} which can be solved
from
\begin{align}
(1\!+\!\Delta_{2,kk})\bm S_k\bm S_k^T\bm \mu_k^*\! +\! \bm
S_k\underset{i\neq k}\Sigma\Delta_{2,ik}\bm S_i^T\bm \mu_i^*\!+\!\bm
c_k\!=\!\bm 0,\,\,k=1\sim N_T,\label{EQ_mu_equations}
\end{align}
$\Delta_{2,ij}$ is the entry at the $i$-th row and $j$-th column of $\Delta_2$,
$\bm S_k = [\bm 1_k,\,\bm 0_{k\times (N_T-k)}]$, $\bm c_k \triangleq
\bm S_k\Delta_1(\bm I+
\Delta_2)\bm e_k$, and $\bm e_k$ is the $k$-th column of the identity matrix.\\
\end{lemma}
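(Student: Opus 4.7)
The plan is a clean two-stage separation, exploiting that $\bm W$ enters only through the objective $\|\bm m\|_2^2$ while $\bm B$ is additionally tied to the triangularity constraint \eqref{EQ_triangular_constraint}. In the first stage I would minimize over $\bm W$ alone. Expanding
\begin{equation*}
\|\bm m\|_2^2 = \mathrm{tr}\!\left(\bm W(\bm H_{23}\bm F\bm F^H\bm H_{23}^H + \bm I)\bm W^H\right) - 2\,\mathrm{Re}\,\mathrm{tr}\!\left(\bm W\bm H_{23}\bm F\bm C^H\right) + \mathrm{tr}(\bm C\bm C^H),
\end{equation*}
then taking the Wirtinger derivative with respect to $\bm W^*$ and setting it to zero gives the normal equation $\bm W(\bm H_{23}\bm F\bm F^H\bm H_{23}^H + \bm I) = \bm C\bm F^H\bm H_{23}^H$, whose unique solution is exactly \eqref{EQ_W}. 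Substituting this $\bm W$ back, the quadratic-in-$\bm W$ term and the cross-term combine, and the matrix inversion lemma delivers the push-through identity $\bm I - \bm F^H\bm H_{23}^H(\bm H_{23}\bm F\bm F^H\bm H_{23}^H + \bm I)^{-1}\bm H_{23}\bm F = (\bm I + \Delta_2)^{-1}$. The reduced objective therefore becomes the convex quadratic $\mathrm{tr}\!\left(\bm C(\bm I + \Delta_2)^{-1}\bm C^H\right)$, and $\Delta_1$ from the statement can be identified directly with the block $\bm F^H\bm H_{23}^H R^{-1}\bm H_{23}\bm F - \bm I$ that arises when one later re-expands the optimal $\bm W$.

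The second stage is to minimize this quadratic over $\bm C = \bm I + \bm B$ subject to the fixed pattern (diagonal entries pinned at $1$, strictly upper entries pinned at $0$) imposed by $\bm B$ being strictly lower triangular. I would dualize these constraints row by row, introducing one vector multiplier $\bm\mu_k$ for the $k$ pinned entries in row $k$ of $\bm C$. The term $\bm S_k^T\bm\mu_k^*\bm e_k^T$ in \eqref{EQ_B_solution} is precisely the corresponding Lagrange contribution: the selector $\bm S_k=[\bm I_k,\bm 0]$ places the components of $\bm\mu_k^*$ in the upper part of the $k$-th column and zeros the rest, so $\sum_i\bm S_i^T\bm\mu_i^*\bm e_i^T$ is an upper-triangular (including-diagonal) multiplier matrix whose support matches the pattern of active constraints. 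Setting the Wirtinger derivative of the Lagrangian with respect to $\bm B^*$ to zero and right-multiplying by $\bm I+\Delta_2$ then yields \eqref{EQ_B_solution}. Re-imposing primal feasibility of the pinned entries, i.e., $\bm S_k\bm C\bm e_k = \bm S_k\bm e_k$ after substituting $\bm B$ back, produces the row-$k$ linear relation \eqref{EQ_mu_equations} for the multipliers, with the term $\bm c_k=\bm S_k\Delta_1(\bm I+\Delta_2)\bm e_k$ absorbing the constant piece.

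The main obstacle I expect is the indexing in the second stage, not its conceptual content: each row of $\bm C$ has a different number of active constraints, so collapsing the row-wise Lagrange conditions into the compact matrix form \eqref{EQ_B_solution} and re-imposing primal feasibility to get \eqref{EQ_mu_equations} requires careful bookkeeping with the selectors $\bm S_k$ and $\bm e_k$, and a clean separation of the diagonal contribution $(1+\Delta_{2,kk})\bm S_k\bm S_k^T\bm\mu_k^*$ from the off-diagonal coupling $\bm S_k\sum_{i\neq k}\Delta_{2,ik}\bm S_i^T\bm\mu_i^*$. The first stage, by contrast, is a textbook MMSE computation that should go through routinely.
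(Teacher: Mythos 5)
Your proposal matches the paper's proof essentially step for step: solve the unconstrained first-order condition for $\bm W$ (which appears in no constraint), substitute back and use the matrix inversion lemma to reduce the objective to a quadratic in $\bm C=\bm I+\bm B$, dualize the triangularity pattern via the selectors $\bm S_k$, and recover the multipliers by re-imposing $\bm S_k\bm B\bm e_k=\bm 0_k$, exactly as in Section \ref{Sec_proof_W_B}. The only quibble is a verbal slip --- those constraints pin the first $k$ entries of the $k$-th \emph{column} of $\bm B$, not of its $k$-th row --- but your subsequent use of $\bm S_k$ and $\bm e_k$ is consistent with the column reading, so nothing breaks.
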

Meanwhile, the coexistence constraint \eqref{EQ_coexistence_constraint3} can be approximated by the following lemma.\\

\begin{lemma} \label{Lemma_Taylor}
Given $ \tilde{\bm F}_2$, we can approximate
\eqref{EQ_coexistence_constraint3} by
\begin{align}&\log(1+t_1+t_2)-\mbox{tr}\left((\bm I+\bm H_{24}\tilde{\bm F}_2\bm H_{24}^H)^{-1}\bm H_{24}\bm F_2\bm H_{24}^H\right)\geq c_0,\label{EQ_Coexistence_constraint_taylor1}\\
&\left(
  \begin{array}{cc}
   t_1  & \mbox{vec}(\bm H_{24}\bm F)^H \\
   \mbox{vec}(\bm H_{24}\bm F)  & \bm I \\
  \end{array}
\right)\succeq\bm 0,\label{EQ_t1_LMI}\\
& \left[
  \begin{array}{cc}
    t_2 & \mbox{vec}(\bm H_{14}+\bm H_{24}\bm A)^H \\
    \mbox{vec}(\bm H_{14}+\bm H_{24}\bm A) & \bm I_2\otimes \Sigma_{\bm x_P} \\
  \end{array}
\right]\succeq\bm 0, \label{EQ_t2_LMI}\\
&\left[
  \begin{array}{cc}
    \bm I & \bm F \\
    \bm F^H & \bm F_2 \\
  \end{array}
\right]\succeq\bm 0, \label{EQ_A3_LMI}
\end{align}
where $c_0=\log|\bm I+\bm H_{14}\Sigma_{\bm x_P}\bm
H_{14}^H|+\log|\bm I+\bm H_{24} \tilde{\bm F}_2\bm
H_{24}^H|-\mbox{tr}\left((\bm I+\bm H_{24}\tilde{\bm F}_2 \bm
H_{24}^H)^{-1}\bm H_{24}\tilde{\bm F}_2\bm H_{24}^H\right)$.\\
\end{lemma}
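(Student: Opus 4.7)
The plan is to convexify the coexistence constraint \eqref{EQ_coexistence_constraint3} by splitting it into two $\log\det$ pieces, relaxing each piece with a standard trick, and introducing lifted variables so that every remaining inequality can be written as an LMI or an affine constraint. Concretely, write the LHS of \eqref{EQ_coexistence_constraint3} as $\log|\bm I+\bm X+\bm Y|-\log|\bm I+\bm X|$ with $\bm X\triangleq\bm H_{24}\bm F\bm F^H\bm H_{24}^H$ and $\bm Y\triangleq (\bm H_{14}+\bm H_{24}\bm A)\Sigma_{\bm x_P}(\bm H_{14}+\bm H_{24}\bm A)^H$; the first term is a concave-in-matrix quantity that is hard to embed directly in a semidefinite program, and the second is concave but enters with a minus sign, which is the opposite of what CVX can handle. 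I would therefore lower-bound the first term by a scalar-trace expression and linearize the second around a reference point $\tilde{\bm F}_2$.

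First, to handle $\log|\bm I+\bm X+\bm Y|$, I would invoke the elementary inequality $|\bm I+\bm M|\ge 1+\mbox{tr}(\bm M)$ for $\bm M\succeq \bm 0$ (immediate from $\prod_i(1+\lambda_i)\ge 1+\sum_i\lambda_i$ when $\lambda_i\ge 0$), giving $\log|\bm I+\bm X+\bm Y|\ge\log\bigl(1+\mbox{tr}(\bm X)+\mbox{tr}(\bm Y)\bigr)$. Then, identifying $\mbox{tr}(\bm X)=\|\mbox{vec}(\bm H_{24}\bm F)\|_2^2$ and $\mbox{tr}(\bm Y)=\|\mbox{vec}((\bm H_{14}+\bm H_{24}\bm A)\Sigma_{\bm x_P}^{1/2})\|_2^2$, I introduce scalar slacks $t_1,t_2$ and rewrite each quadratic-in-vec bound via Schur's complement as the LMIs \eqref{EQ_t1_LMI} and \eqref{EQ_t2_LMI}, yielding the lifted lower bound $\log(1+t_1+t_2)$.

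Second, to deal with $-\log|\bm I+\bm X|$, I would introduce a surrogate PSD matrix variable $\bm F_2$ forced to dominate $\bm F\bm F^H$ through the Schur-complement LMI \eqref{EQ_A3_LMI}, and then exploit the concavity of $\log\det$ to write the first-order supporting hyperplane at the expansion point $\tilde{\bm F}_2$:
\begin{align}
\log|\bm I+\bm H_{24}\bm F_2\bm H_{24}^H|\le\log|\bm I+\bm H_{24}\tilde{\bm F}_2\bm H_{24}^H|+\mbox{tr}\!\left((\bm I+\bm H_{24}\tilde{\bm F}_2\bm H_{24}^H)^{-1}\bm H_{24}(\bm F_2-\tilde{\bm F}_2)\bm H_{24}^H\right).\notag
\end{align}
Monotonicity of $\log\det$ under PSD order plus $\bm F_2\succeq\bm F\bm F^H$ ensures the same expression is an upper bound of the original $\log|\bm I+\bm H_{24}\bm F\bm F^H\bm H_{24}^H|$. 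Substituting this upper bound into \eqref{EQ_coexistence_constraint3}, moving the constant pieces (those depending only on $\tilde{\bm F}_2$ and $\bm H_{14}$) to the RHS and lumping them as $c_0$, leaves exactly \eqref{EQ_Coexistence_constraint_taylor1}.

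The main obstacle is bookkeeping the direction of each relaxation so that the resulting system is a genuine sufficient condition for \eqref{EQ_coexistence_constraint3}: one needs a lower bound on $\log|\bm I+\bm X+\bm Y|$ but an upper bound on $\log|\bm I+\bm X|$, and the Schur-complement lifts must respect this (in particular, $\bm F_2$ has to be above $\bm F\bm F^H$ in the PSD order so that replacing $\bm F\bm F^H$ by $\bm F_2$ inside the log only enlarges the term being upper-bounded). A secondary point is tightness: the bound $|\bm I+\bm M|\ge 1+\mbox{tr}(\bm M)$ and the first-order Taylor expansion are both exact only at specific configurations, so the lemma is only useful inside the sequential convex outer loop of Section~\ref{Sec_practical_design}, where $\tilde{\bm F}_2$ is refreshed to $\bm F\bm F^H$ from the previous iterate, making the linearization tight at the fixed point.
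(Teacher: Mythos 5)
Your proposal follows essentially the same route as the paper's proof: lower-bounding the numerator via $|\bm I+\bm M|\geq 1+\mbox{tr}(\bm M)$, lifting the two trace terms into the slack variables $t_1,t_2$ with Schur-complement LMIs, relaxing $\bm F\bm F^H$ to $\bm F_2\succeq\bm F\bm F^H$, and upper-bounding the denominator by its first-order Taylor expansion at $\tilde{\bm F}_2$. Your explicit bookkeeping of the bound directions (lower bound on the numerator, upper bound on the denominator, so that the relaxed constraint is sufficient for \eqref{EQ_coexistence_constraint3}) is a point the paper leaves implicit, but the argument is the same.
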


The proof is provided in Section \ref{Sec_Proof_lemma_Taylor}.
To summarize the above, we can formulate a relaxed problem as
\begin{align}
\mathbf{P2:}\!\!\!\!\min_{\bm A,\,\bm F,\,\bm F_2,\,t_0,\,t_1,\,t_2}&\,\, t_0\mbox{, s.t.
}\eqref{EQ_t0_LMI},\,\,\eqref{EQ_W},\,\,\eqref{EQ_B_solution},\,\,\eqref{EQ_Coexistence_constraint_taylor1},\,\,\eqref{EQ_t1_LMI},\,\,\eqref{EQ_t2_LMI},\,\,\eqref{EQ_A3_LMI}.\notag
\end{align}
To proceed, we need to check the convexity of the constraints first. Note that $c_0$ is a constant. Thus
\eqref{EQ_Coexistence_constraint_taylor1} is convex of the slack
variables $t_1$, $t_2$, and $\bm F_2$ and thus can be solved by CVX
\cite{cvx}. In addition, since \eqref{EQ_t1_LMI}, \eqref{EQ_t2_LMI},
and \eqref{EQ_A3_LMI} are linear matrix inequalities, they also can
be solved by CVX. Note also that from \eqref{EQ_B_solution} $\bm B$
is not convex of $\bm F$. However, we may fix $\bm B$ and $\bm W$
first, and solve $\bm A$ and $\bm F$, which is then a convex problem. Thus
$\mathbf{P2}$ can be solved by CVX with fixed $\bm B$ and $\bm W$.
After that we substitute the solved $\bm F$ into
\eqref{EQ_B_solution} and \eqref{EQ_W} to find $\bm B$ and $\bm W$,
respectively. And we repeat this procedure iteratively. As a result
we propose the following iterative algorithm in Table
\ref{table_alg} to solve $\bm A,\,\bm B,\,\bm F$, and $\bm W$. Note
that from the numerical results we observe that the original
coexistence constraint \eqref{EQ_coexistence_constraint3} may be
violated during the iteration, which is due to the relaxation (however, the relaxed constraints
\eqref{EQ_Coexistence_constraint_taylor1}, \eqref{EQ_t1_LMI},
\eqref{EQ_t2_LMI}, and \eqref{EQ_A3_LMI} are all satisfied during the
iterations.) We also observe that the LHS of
\eqref{EQ_coexistence_constraint3} may decrease during
the iteration. Thus we insert an additional constraint as Step 6 into the algorithm to
help to terminate it.\\

\begin{table}[ht]
\caption{The Proposed Algorithm for solving the precoding matrix $\bm B$, the relaying matrix $\bm A$, the transmit beamformer $\bm F$, and the receive beamformer $\bm W$.} \normalsize
\centering % used for centering table
\vspace{-0.5cm}
\begin{tabular}{l} % centered columns (4 columns)
\line(1,0){480}\\ %inserts double horizontal lines
1: Initialize $\bm A^{(0)}=\bm 0,\,\,\bm F^{(0)}=\bm 0$ (thus $\bm B^{(0)}=\bm 0$),\,\,$\bm F_2^{(0)}=\bm 0$, $\tilde{\bm F}_2^{(0)}=\bm 0$, and $n=0$.\\
2: $\mathbf{Repeat}$\\
3: \hspace{0.5 cm}n=n+1\\
4: \hspace{0.5 cm}Given $\bm B^{(n-1)}$, $\bm W^{(n-1)}$, $\tilde{\bm F}_2^{(n-1)}$, solve $\bm A^{(n)}$ and $\bm F^{(n)}$ from $\mathbf{P2}$, \\
5: \hspace{0.5 cm}Update $\bm W^{(n)}=\bm C(\bm F^{(n)})^H\bm H_{23}^H(\bm H_{23}\bm F^{(n)}(\bm
F^{(n)})^H\bm H_{23}^H+ \bm I)^{-1}$ and\\
 \hspace{2.3 cm}$\bm B^{(n)}= (\Delta_1^{(n)}+\Sigma_{i=1}^{N_T}\bm S_i^T\bm \mu_i^{*,(n)}\bm e_i^T)(\bm I+\bm \Delta_2^{(n)})$, where $\Delta_1^{(n)}$, $\Delta_2^{(n)}$, and $\bm \mu_i^{*,(n)}$ are from \\
 \hspace{2.3 cm}\eqref{EQ_Delta1}, \eqref{EQ_Delta2}, \eqref{EQ_mu_equations}, respectively.\\
 \hspace{0.9 cm} Also update $ \tilde{\bm F}_2^{(n)}\leftarrow\bm F_2^{(n)}$.\\
6: $\mathbf{Until}$ The equality in \eqref{EQ_coexistence_constraint3} is valid.\\
\line(1,0){480}\\%inserts single line
\end{tabular}
\label{table_alg} % is used to refer this table in the text
\end{table}

\section{Numerical results}\label{Sec_numerical_result}
In this section, we first show the sum rate performance of the SU networks of the proposed
pairing scheme. Then we
use numerical results to illustrate the sum MSE performance of the proposed
joint design of the relay, THP, and transmitter/receiver
beamformers. We assume that the PU-TX, PU-RX, SU-TX, and SU-RX are
all equipped with 2 antennas. We also assume that with the knowledge
of $\bm H_{14}$, PU-TX can find $\Sigma_{\bm x_P}$ as explained in
Sec. \ref{Sec_practical_design}. The adopted channel model in the
pairing simulation is the general Kronecker product form channel
model \cite{Kermoal_MIMO_channel_model}: $\bm H=\sqrt{\rho}\bm
R_r^{1/2}\bm H_w \bm R_t^{1/2}$, where $\rho=1/(2 d^{\beta})$, $d$
is the distance between the uniformly generated PU and SU nodes and
$\beta$ is the path loss exponent which is set as 3; $\bm H_w\in
\mathds{C}^{2\times 2}$ is a zero mean unit variance i.i.d. complex
Gaussian matrix between the transmitter and receiver; $\bm R_t$ and
$\bm R_r$ are the transmit and receive correlation matrices which
are modeled by $[\bm R_t]_{ij}=\gamma_t^{|i-j|^2}$ and $[\bm
R_r]_{ij}=\gamma_r^{|i-j|^2}$, respectively, where $\gamma_t$ and
$\gamma_r$ are the correlation coefficients, respectively, which are
modeled as uniformly distributed variables within [0, 1]. The
multiplier $\bm \mu$ is randomly initialized within $[0,2]^M$ and
$\gamma^{(l)}$ is selected as $\gamma^{(l)}=0.05/\sqrt{l}$. In Fig.
\ref{Fig_pairing} we compare the sum rate performances of the
proposed method with greedy algorithm and random pairing. The greedy
algorithm randomly searches an SU and pairs it to PU in the greedy
sense. After removing the paired PU and SU nodes, this procedure is
done iteratively. The random pairing selects SU and pairs it to PU
both randomly. We consider three numbers of pairs in this example:
$M=$5, 10, and 20. We can find that the proposed scheme outperforms
the others in sum rates. Note that owing to the representation of
$\alpha$ in Theorem \ref{Th_chordal}, we can attain the pairing with
low complexity.

In the following we demonstrate the performance of the proposed transmission scheme. The SU-TX is assumed to know PU-TX's
selection of $\Sigma_{\bm x_P}$ to solve $\mathbf{P2}$. Without loss
of generality, we choose the vector channels as
\begin{align}\bm H_{13} &=\left(%
\begin{array}{cc}
  -0.7 + 0.28i & 1.82 + 0.2i \\
  -0.35 - 0.67i & -1.64 + 1.31i \\
\end{array}%
\right),\,\,
\bm H_{14} =\left(%
\begin{array}{cc}
  0.97 - 0.66i & -0.03 +
    0.77i \\
  -0.07 - 0.05i & 0.3 - 0.32i \\
\end{array}%
\right),\notag\\
\bm H_{23} &=\left(%
\begin{array}{cc}
  -1.12 + 0.57i & 0.41 +
    1.7i \\
  1.46 - 0.92i & 1.13 - 0.02i \\
\end{array}%
\right),\,\,
\bm H_{24} =\left(%
\begin{array}{cc}
  -0.77 + 0.38i & -0.09 - 0.41i \\
  -0.84 - 0.61i & 0.63 + 1.12i \\
\end{array}%
\right).\end{align}

Note that the above channels are all full rank. Thus it is
impossible to use the traditional zero forcing methods as
\cite{Zhang_ZF_CR}\cite{Luca_ZF_CR}\cite{Hamdi_ZF_CR}\cite{Heath_MIMO_CR} to avoid the
interferences from CR's signals to PU's receiver. Therefore the
coexistence constraint will never be valid. On the other hand, with the aid of
the considered relay as discussed previously, it is possible for the
CR systems to simultaneously transmit with PU. To overcome the caveat of the traditional ZF, we generalize it as \footnote{ \renewcommand{\baselinestretch}{2}
         For transmitting SU's own signal, it is better to avoid the direction to the PU RX. However, directly
         choosing the column vectors of $\bm H_{24}^{\perp}$ as the directions for transmitting SU's own signal may not be good when $\bm H_{23}$ has similar direction to $\bm H_{24}$.
          Similarly, to design $\bm A$, it is better to avoid the direction $\bm H_{23}$ such that not to interfere SU RX.
          Again, if $\bm H_{24}$ has similar direction to $\bm H_{23}$, this selection does not work well. So we resort
          to the linear combination of the desired subspace with its null space and then we optimize the coefficients
           to find the best directions. However, directions of $\bm F$ and $\bm A$ are designed almost
           separately, which may be inefficient compared to our method.}
    $\bm F=\sqrt{\alpha_{\bm F}P_T}\left(\sqrt{\gamma_{\bm F}}\mathbf{V}_1'\Sigma_1^{1/2}+\sqrt{1-\gamma_{\bm F}}\mathbf{V}_2'\Sigma_2^{1/2}\right),$ where $\bm V_1'$ and $\bm V_2'$ are formed by the singular vectors in $\bm V_1$ and $\bm V_2$ corresponding to non-zero singular values of $\Pi_{24}\bm H_{23}= \bm U_1\bm D_1\bm V_1^H$ and $\Pi_{24}^{\perp}\bm H_{23}= \bm U_2\bm D_2\bm V_2^H$, respectively, which are mutually orthogonal;
      $\Sigma_1$ and $\Sigma_2$ are derived from water-filling according to the singular values $\bm D_1$ and $\bm D_2$ respectively with
      $\mbox{tr}(\Sigma_1)=\mbox{tr}(\Sigma_2)=1$; $\alpha_{\bm F}\in[0,1]$  denotes the ratio of the total power allocated to the SU's signal transmission,
     and $\gamma_{\bm F}$ controls the power allocated in the directions of $\bm H_{24}$ and its null space; $\Pi_{24}=\bm H_{24}^H(\bm H_{24} \bm H_{24}^H)^{-1}\bm H_{24}$
      is a projection matrix projecting the operand to its column space and $\Pi_{24}^{\perp}=\bm I-\Pi_{24}$. Similarly, we can let $\bm A=\sqrt{\alpha_{\bm A}P_T}\left(\sqrt{\gamma_{\bm A}}\mathbf{V}_3'\Sigma_3^{1/2}+\sqrt{1-\gamma_{\bm A}}\mathbf{V}_4'\Sigma_4^{1/2}\right),$ where $\bm V_3'$ and $\bm V_4'$ are formed by the eigenvectors in $\bm V_3$ and $\bm V_4$ corresponding
          to non-zero singular values of $\Pi_{23}\bm H_{24}= \bm U_3\bm D_3\bm V_3^H$ and $\Pi_{23}^{\perp}\bm H_{24}= \bm U_4\bm D_4\bm V_4^H$, respectively.
    $\gamma_{\bm A}\in[0,1]$ controls the power allocated in the directions of $\bm H_{23}$ and its null space; $\Pi_{23}=\bm H_{23}^H(\bm H_{23} \bm H_{23}^H)^{-1}\bm H_{23}$
     is a projection matrix projecting the operand to its column space and $\Pi_{23}^{\perp}=\bm I-\Pi_{23}$. Note that $\Sigma_3$ and $\Sigma_4$ are derived from water-filling
     according to the singular values $\bm D_3$ and $\bm D_4$. And $\mbox{tr}(\Sigma_3)=\mbox{tr}(\Sigma_4)=1$. Then based on the line search, we can solve the three variables $\alpha_{\bm F},\,\gamma_{\bm F},\,\mbox{and }\gamma_{\bm A}$\footnote{\renewcommand{\baselinestretch}{2}
        When $\gamma_{\bm F}=0$, SU's transmission is zero forcing to PU's reception. Similarly, when $\gamma_{\bm A}=0$,
        the relay of PU's signal is zero forcing to SU's reception.}. In Fig.
\ref{Fig_SMSE_PT_with_W} we show the sum MSE performance versus
$P_T$ under different $P_P$s. From this figure we can find that the
minimum sum MSE may increase with increasing $P_P$. This is due to
the fact that when increasing $P_P$, i.e., SU-TX needs to allocate
more power for relaying PR's signal to make the coexistence
constraint valid. Thus less power can be used by SU-TX to transmit
its own signals and the sum MSE performance deteriorates with
increasing $P_P$. From the same figure we can easily see that the
proposed joint optimization outperforms the generalized ZF.  In
Fig. \ref{Fig_convergence} we show that how the sum MSE and the
coexistence constraint change with the number
 of iterations with $P_P=17$dB and $P_T=20$dB. Note that the coexistence constraint in the figure means the difference between the LHS and RHS of \eqref{EQ_coexistence_constraint3}. It can be seen that both the coexistence
 constraint and the sum MSE decreases with increasing number of iterations. These two subplots are consistent
 since as long as the coexistence constraint decreases, i.e., the LHS of \eqref{EQ_coexistence_constraint3}
 decreases, and more power is allocated to transmit SU's own signal. And thus SU can have better performance.
 As mentioned previously, the coexistence constraint decreases with the iterations, thus we need to check its
 crossover point to stop the iteration. In Fig.
\ref{Fig_CDF} we show the cumulative density function (CDF)
$F(\textsf{SMSE})$ of the SMSE under fast fading channels, where the
number of random channel realizations is $10^4$ and the four channels
are all set as i.i.d. Gaussian matrices with zero mean and unit
variance. From this figure we can find the most probable region of
the SMSE of the proposed scheme under random channels, which may not
be able to reflect in Fig. \ref{Fig_SMSE_PT_with_W}. Due to the
complexity of the ZF scheme, we do not show the CDF performance of it in
the same figure. In Fig. \ref{Fig_SER} we compare the symbol error rate (SER) performances between the proposed scheme and the generalized ZF. The SERs in both cases are averaged over two antennas. It can be easily seen that the performance loss owning to the non-optimized $\bm A,\,\bm B,\, \bm F,$ and $\bm W$ is large, which is about 5dB and 4.5 dB in the considered cases $P_T=50$ and $P_T=100$, respectively. On the other hand, the SER increases with $P_T$, which is consistent to that in Fig. \ref{Fig_SMSE_PT_with_W}. In Fig. \ref{Fig_RelayRatio_EffectivePower} we compare the relay ratios and the power SU can transmit his own signal versus $P_T$ between the proposed scheme and the generalized ZF, where the former is denoted by solid lines and the later is denoted by dashed lines. It can be seen that the power SU can use for transmitting his own signal increases with decreasing \!$P_P$ and increasing \!$P_T$, which is consistent to that of the SISO case in Fig.\! \ref{Fig_pairing_SISO_SU_rate_P1_PTfixed}
\begin{figure}[H]
\centering \epsfig{file=./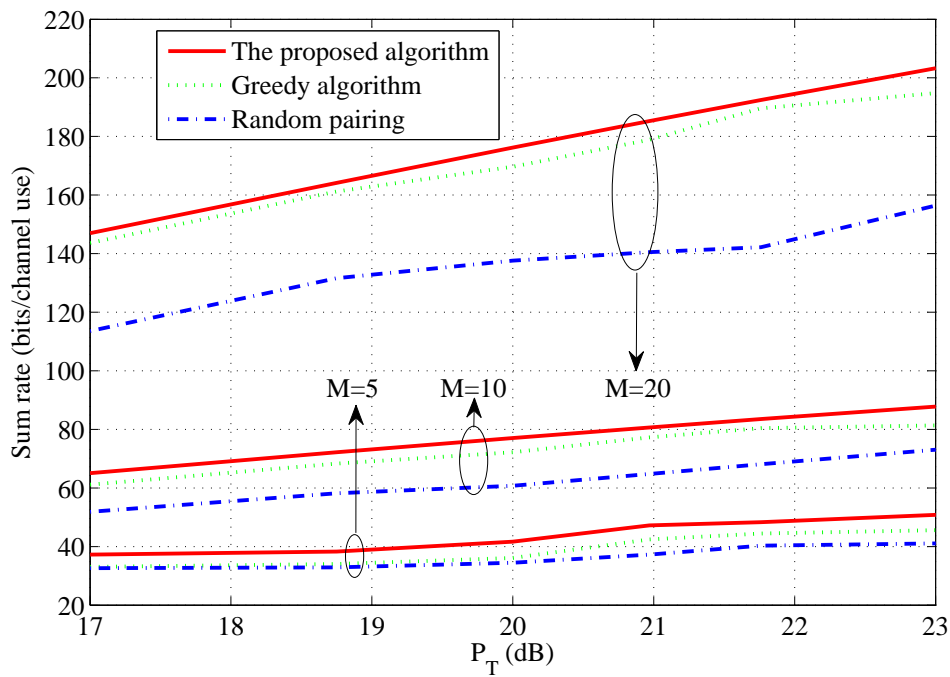 , width=.8\textwidth,
angle=0} \caption{Comparison of different pairing schemes.} \label{Fig_pairing}
\end{figure}
\begin{figure}[H]
\centering \epsfig{file=./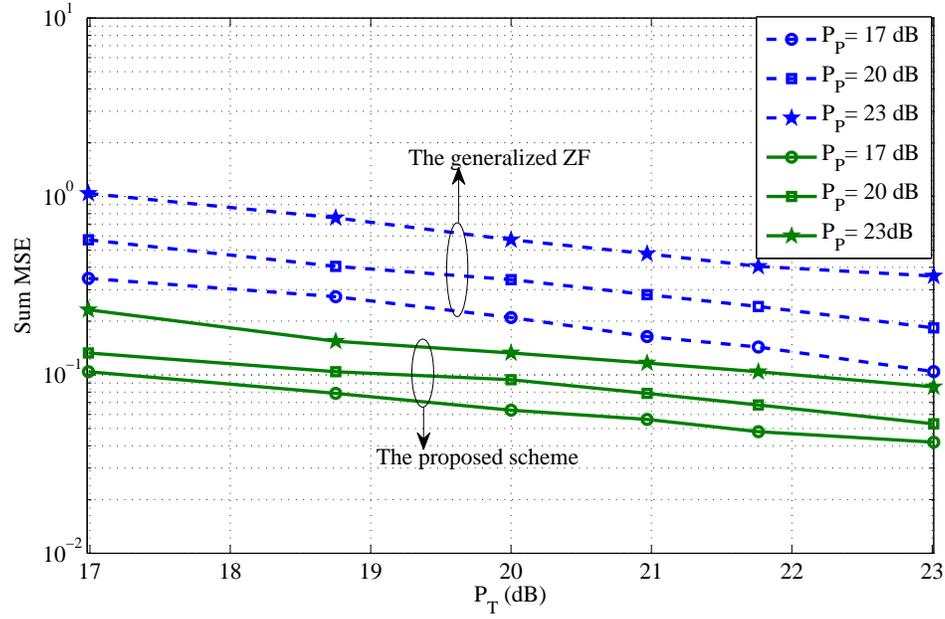 , width=.8\textwidth,
angle=0} \caption{The sum MSE performance versus $P_T$ under
 fixed channels with different $P_P$'s.} \label{Fig_SMSE_PT_with_W}
\end{figure}

\begin{figure}[H]
\centering \epsfig{file=./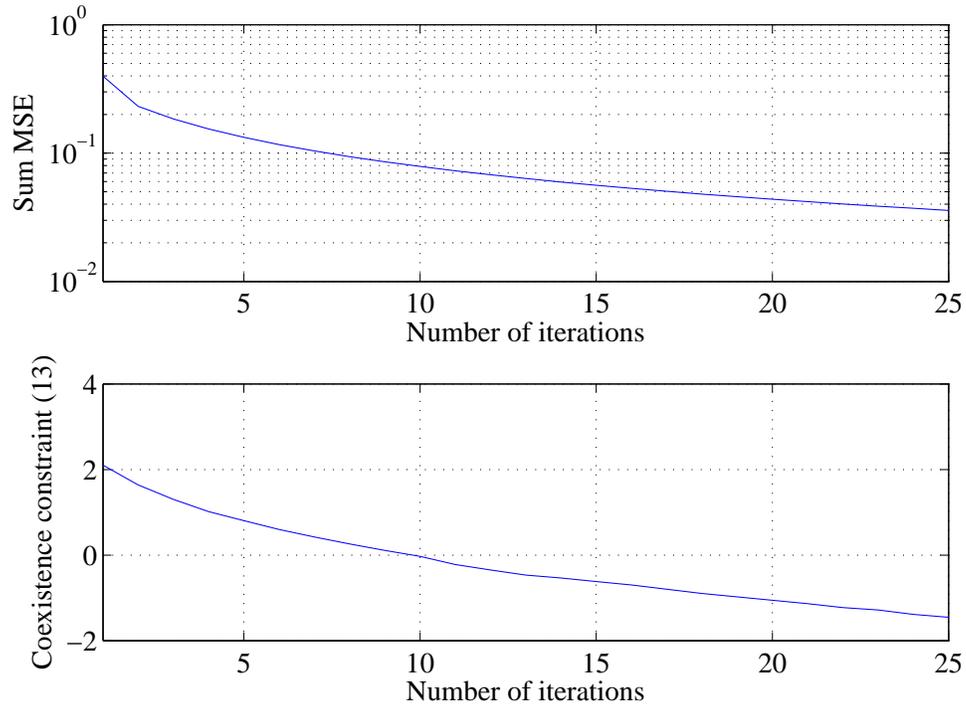 , width=.8\textwidth,
angle=0} \caption{The changes of the sum MSE and the coexistence constraint with the number of iterations.}
\label{Fig_convergence}
\end{figure}

\begin{figure}[H]
\centering \epsfig{file=./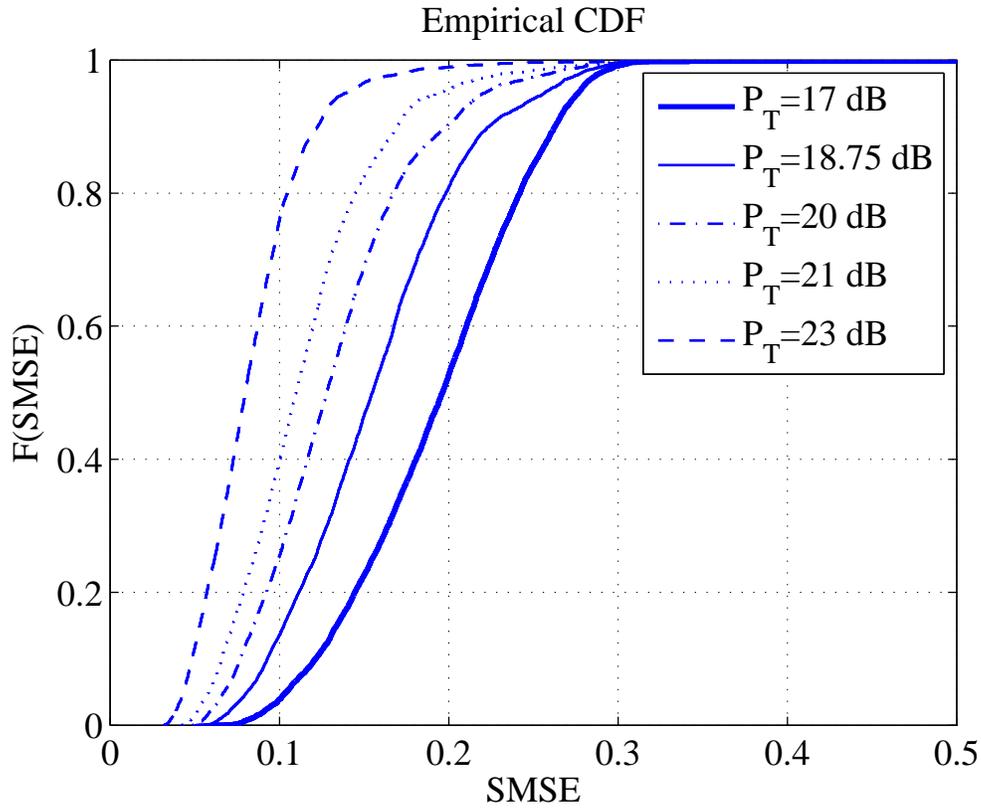 , width=.8\textwidth, angle=0}
\caption{The CDF of sum MSE performance under fast fading channels
with different $P_T$'s.} \label{Fig_CDF}
\end{figure}

\begin{figure}[H]
\centering \epsfig{file=./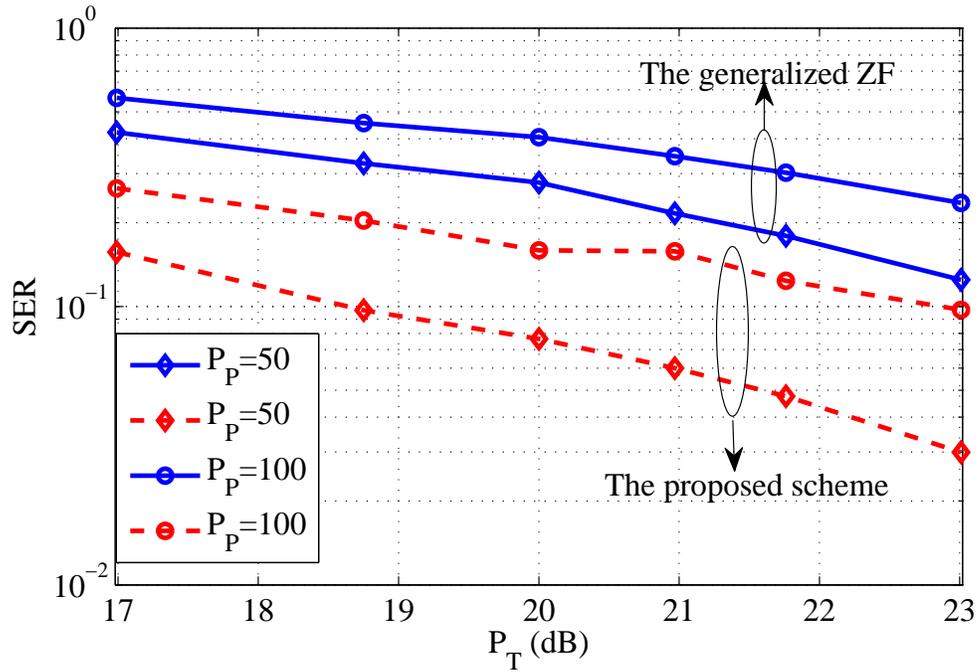 , width=.8\textwidth, angle=0}
\caption{The SER performance.} \label{Fig_SER}
\end{figure}

\begin{figure}[H]
\centering \epsfig{file=./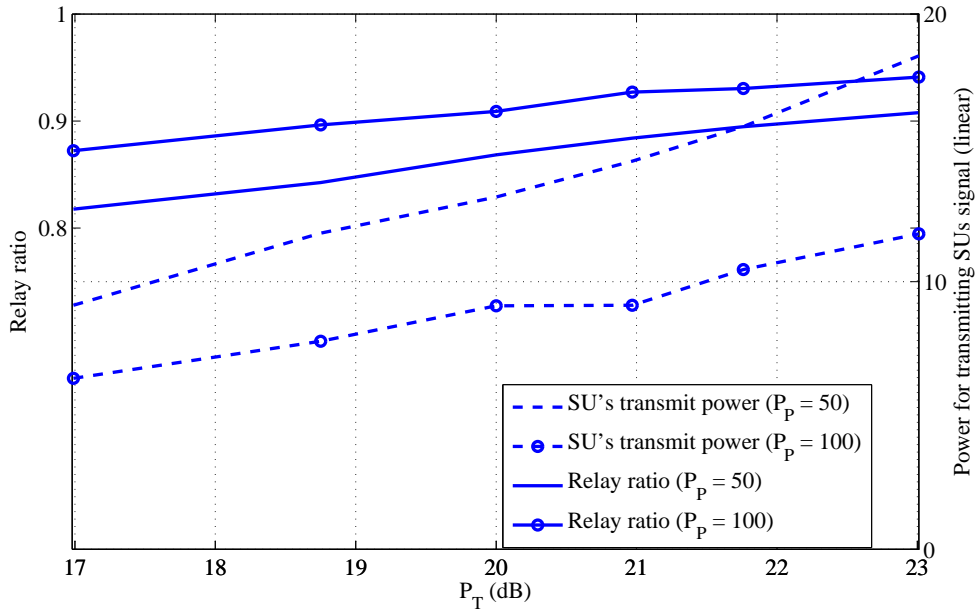 , width=.8\textwidth, angle=0}
\caption{The variation of relay ratio and SU's effective power for transmission with respect to SU's total power constraint.} \label{Fig_RelayRatio_EffectivePower}
\end{figure}

\section{Conclusion}\label{Sec_conclusion}
In this work we practically designed the multiple-antenna cognitive
radio (CR) networks with the coexistence constraint. The considered primary user (PU) is a half-duplex two-hop relay
channel and the secondary user (SU) is a single user additive white
Gaussian noise channel. The contribution of this paper are twofold. First,
we explicitly designed the scheme to pair the SUs to the existing PUs in a cellular network.
Second, we jointly designed the nonlinear precoder, relay beamformer, and the transmitter and
receiver beamformers to minimize the sum mean
square error of the SU system. In the first part, we solve the optimal pairing between SUs and PUs such
that the sum rate of SUs is maximized. To achieve it, we derive the relation between the relay ratio,
 chordal distance and strengths of the vector channels, and the transmit powers. In the second part,
 we adopt the Tomlinson-Harashima precoding instead of the dirty paper coding to mitigate the interference at
the SU receiver. To complete this design, we first approximated the optimization
problem as a convex one. Then we proposed an iterative algorithm to
solve it with CVX. This joint design exploits all the
degrees of design freedom including the spatial diversity, power, and the side information of the
interference at transmitter, which may result in better SU's
performance than those without the joint design. To the best of our
knowledge, both the two parts have never been considered in the literature.
Numerical results verifies the performance of the proposed schemes.

\section{Proof of Theorem \ref{Th_chordal}}\label{Proof_Theo_chordal}
\begin{proof}
In this proof we use the upper bound of the LHS of \eqref{EQ_coexistence_constraint3} to derive the approximate relay ratio $\alpha$. We rearrange the LHS of \eqref{EQ_coexistence_constraint3} as
\begin{align}
&\log|\bm I + (\bm I + \bm H_{24}\bm F\bm F^H\bm H_{24}^H)^{-1}(\bm H_{14}+\bm H_{24}\bm A)\Sigma_{\bm x_P}(\bm H_{14}+\bm H_{24}\bm A)^H|\notag\\
\overset{(a)} \leq & N_T\log\left(\frac{1}{N_T}\mbox{tr}\left(\bm I + (\bm I + \bm H_{24}\bm F\bm F^H\bm H_{24}^H)^{-1}(\bm H_{14}+\bm H_{24}\bm A)\Sigma_{\bm x_P}(\bm H_{14}+\bm H_{24}\bm A)^H\right)\right)\notag\\
\overset{(b)}\leq & N_T\log\left(1+\frac{1}{N_T}\lambda_{max}(\bm Q)\mbox{tr}\left((\bm H_{14}+\bm H_{24}\bm A)\Sigma_{\bm x_P}(\bm H_{14}+\bm H_{24}\bm A)^H\right)\right)\notag\\
\overset{(c)}\leq & N_T\log\left(1+\frac{1}{N_T}\lambda_{max}(\bm Q)\lambda_{max}(\Sigma_{\bm x_P})\left(\mbox{tr}(\bm A^H\bm H_{24}^H\bm H_{14})+\mbox{tr}(\bm H_{14}^H\bm H_{24}\bm A)+\mbox{tr}(\bm H_{14}\bm H_{14}^H)+\mbox{tr}(\bm H_{24}\bm A \bm A^H \bm H_{24}^H)\right)\right)\notag\\
\overset{(d)}\leq & N_T\log\left(1+\frac{1}{N_T}\lambda_{max}(\bm Q)\lambda_{max}(\Sigma_{\bm x_P})\left(|\mbox{tr}(\bm A^H\bm H_{24}^H\bm H_{14})|^2+1+\mbox{tr}(\bm H_{14}\bm H_{14}^H)+\mbox{tr}(\bm H_{24}\bm A \bm A^H \bm H_{24}^H)\right)\right)\notag\\
\overset{(e)}\leq & N_T\log\left(1+\frac{1}{N_T}\lambda_{max}(\bm Q)\lambda_{max}(\Sigma_{\bm x_P})\left({\mbox{tr}(\bm A\bm A^H)}{\mbox{tr}(\bm H_{24}^H\bm H_{14}\bm H_{14}^H\bm H_{24})}+c+\mbox{tr}(\bm H_{24}\bm A \bm A^H \bm H_{24}^H))\right)\right)\notag\\
%\overset{(e2)}\leq & N_T\log\left(1+\frac{1}{N_T}\lambda_{max}(\bm Q)\lambda_{max}(\Sigma_{\bm x_P})\left({\mbox{tr}(\bm A\bm A^H)}({\mbox{tr}(\bm H_{24}^H\bm H_{14}\bm H_{14}^H\bm H_{24})}+\mbox{tr}(\bm H_{24}H_{24}^H))+c\right)\right)\notag\\
\overset{(f)}\leq & N_T\log\!\left(\!1\!\!+\!\!\frac{1}{N_T}\lambda_{max}\!(\!\bm Q)\lambda_{max}(\Sigma_{\bm x_P})\left({\mbox{tr}(\bm A\bm A^H)}{\left(\mbox{tr}(\bm U_{14}^H\bm U_{24}\bm D_{24}^2\bm U_{24}^H\bm U_{14}\bm D_{14}^2)+\mbox{tr}(\bm D_{24}^2)\right)}\!+c\right)\!\!\right)\notag\\
\overset{(g)}\leq & N_T\log\!\left(\!1\!\!+\!\!\frac{1}{N_T}\lambda_{max}\!(\!\bm Q)\lambda_{max}(\Sigma_{\bm x_P})\left({\mbox{tr}(\bm A\bm A^H)}{\left(\mbox{tr}(\bm U_{14}^H\bm U_{24}\bm D_{24}^2\bm U_{24}^H\bm U_{14})\mbox{tr}(\bm D_{14}^2)+\mbox{tr}(\bm D_{24}^2)\right)}\!+c\right)\!\!\right)\notag\\
%\overset{(g)}\leq & \!N_T\!\!\log\!\!\Big(\!\!1\!\!+\!\!\frac{1}{N_T}\!\lambda_{max}(\bm Q)\!\lambda_{max}(\Sigma_{\bm x_P})\!\Big({\mbox{tr}(\bm A\bm A^H)}\!\left(\lambda_{max}(\!\bm U_{14}^H\!\bm U_{24}\!)\mbox{tr}(\!\bm D_{24}^2\!\bm U_{24}^H\!\bm U_{14}\!\bm D_{14}^2)+\mbox{tr}(\bm D_{24}^2)\right)\!\!+c\Big)\!\!\Big)\notag\\
\overset{(h)}\triangleq & N_T\log\!\left(\!1\!\!+\!\!\frac{1}{N_T}\lambda_{max}\!(\!\bm Q)\lambda_{max}(\Sigma_{\bm x_P})\left({\mbox{tr}(\bm A\bm A^H)}{\left(\mbox{tr}(\bm T\bm T^H\bm D_{24}^2)\mbox{tr}(\bm D_{14}^2)+\mbox{tr}(\bm D_{24}^2)\right)}\!+c\right)\!\!\right)\notag\\
\overset{(i)}\leq & N_T\log\!\left(\!1\!\!+\!\!\frac{1}{N_T}\lambda_{max}\!(\!\bm Q)\lambda_{max}(\Sigma_{\bm x_P})\left({\mbox{tr}(\bm A\bm A^H)}{\left(\Sigma_i d_{24,max}^2\lambda_i(\bm T\bm T^H))\mbox{tr}(\bm D_{14}^2)+\mbox{tr}(\bm D_{24}^2)\right)}\!+c\right)\!\!\right)\notag\\
\overset{(j)}= & N_T\log\!\left(\!1\!\!+\!\!\frac{1}{N_T}\lambda_{max}\!(\!\bm Q)\lambda_{max}(\Sigma_{\bm x_P})\left({\mbox{tr}(\bm A\bm A^H)}{\left(d_{24,max}^2 d_c^2\mbox{tr}(\bm D_{14}^2)+\mbox{tr}(\bm D_{24}^2)\right)}\!+c\right)\!\!\right)\notag\\
%\overset{(h)}\leq & N_T\log\!\Big(\!1\!+\!\frac{1}{N_T}\lambda_{max}(\bm Q)\lambda_{max}(\Sigma_{\bm x_P})\Big({\mbox{tr}(\bm A\bm A^H)}\left(\lambda_{max}(\bm U_{14}^H\bm U_{24})^2\mbox{tr}(\bm D_{14}^2\bm D_{24}^2)+\mbox{tr}(\bm D_{24}^2)\right)\!+c\Big)\!\!\Big)\notag\\
%\overset{(i)}\leq & N_T\log\Big(1+\frac{1}{N_T}\lambda_{max}(\bm Q)\lambda_{max}(\Sigma_{\bm x_P})\Big({\mbox{tr}(\bm A\bm A^H)}\left(\mbox{tr}(\bm D_{14}^2\bm D_{24}^2){\sin}^2{\theta_{max}}+\mbox{tr}(\bm D_{24}^2)\right)+c\Big)\!\!\Big)\notag\\
\overset{(k)}\simeq & N_T\log\Big(1+\frac{1}{N_T}\lambda_{max}(\bm Q)\Big({{\mbox{tr}(\bm A\Sigma_{\bm x_P}\bm A^H)}}\left(d_{24,max}^2 d_c^2\mbox{tr}(\bm D_{14}^2)+\mbox{tr}(\bm D_{24}^2)\right)+\lambda_{max}(\Sigma_{\bm x_P})c\Big)\!\!\Big)\notag\\
%\overset{(k)}= & N_T\log\Big(1+\frac{1}{N_T}\lambda_{max}(\bm Q)P_T\Big({\frac{\mbox{tr}(\bm A\Sigma_{\bm x_P}\bm A^H)}{P_T}}\left(\mbox{tr}(\bm D_{14}^2\bm D_{24}^2){\sin}^2{\theta_{max}}+\mbox{tr}(\bm D_{24}^2)\right)+\frac{\lambda_{max}(\Sigma_{\bm x_P})}{P_T}c\Big)\!\!\Big)\notag\\
\overset{(l)}\leq & N_T\log\Big(1+\frac{1}{N_T}\lambda_{max}(\bm Q)\Big(\alpha P_T\cdot d_{24,max}^2\left( d_c^2\mbox{tr}(\bm D_{14}^2)+N_T\right)+\lambda_{max}(\Sigma_{\bm x_P})c\Big)\!\!\Big)\notag\\
\overset{(m)}\leq & N_T\log\Big(1+\frac{1}{N_T}\frac{P_T}{1+\lambda_{min}(\bm H_{24}\bm F\bm F^H\bm H_{24}^H)}\Big(\alpha\cdot d_{24,max}^2\left( d_c^2\mbox{tr}(\bm D_{14}^2)+N_T\right)+\frac{\lambda_{max}(\Sigma_{\bm x_P})}{P_T}c\Big)\!\!\Big)\notag\\
\overset{(n)}\leq & N_T\log\Big(1+\frac{1}{N_T}\frac{P_T}{1+(1-\alpha)P_T\lambda_{min}(\bm H_{24}^H\bm H_{24})\lambda_{min}(\tilde{\bm F}\tilde{\bm F}^H)}\Big(\alpha\cdot d_{24,max}^2 d_c^2\left(\mbox{tr}(\bm D_{14}^2)+N_T\right)+\frac{\lambda_{max}(\Sigma_{\bm x_P})}{P_T}c\Big)\!\!\Big)\notag\\
\overset{(o)}= & N_T\log\Big(1+\frac{1}{N_T}\frac{\alpha P_T d_{24,max}^2}{1+(1-\alpha)P_T d_{24,min}^2\lambda_{min}(\tilde{\bm F}\tilde{\bm F}^H)}\Big(\left( d_c^2+c_1\right)\mbox{tr}(\bm D_{14}^2)+d_c^2N_T+c_1\Big)\!\!\Big),\label{EQ_approximated_PU2}
\end{align}
where (a) uses $\det(\bm M)\leq (\mbox{tr}(\bm M)/m)^m$, and $m$ is
the dimension of the square matrix $\bm M$; (b) is by the properties
$\mbox{tr}(\bm P\bm Q\bm P^H)\leq \lambda_{max}(\bm Q)\mbox{tr}(\bm
P\bm P^H)$ and $\mbox{tr}(\bm P\bm Q)=\mbox{tr}(\bm Q\bm P)$, where
$\bm Q\triangleq (\bm I + \bm H_{24}\bm F\bm F^H\bm H_{24}^H)^{-1}$
and $\bm P\triangleq (\bm H_{14}+\bm H_{24}\bm A)\Sigma_{\bm
x_P}^{1/2}$; (c) expands the product inside $\tr(.)$; (d) uses
$|\mbox{tr}(\bm A^H\bm H_{24}\bm H_{14})-1|^2\geq 0$; (e) uses
$|\mbox{tr}(\bm P\bm Q)|^2\leq \mbox{tr}(\bm P\bm P^H)\mbox{tr}(\bm
Q\bm Q^H)$, and we define $c=\mbox{tr}(\bm H_{14}\bm
H_{14}^H)+1$; in (f) we
decompose $\bm H_{24}=\bm U_{24}\bm D_{24}\bm V_{24}^H$ and $\bm
H_{14}=\bm U_{14}\bm D_{14}\bm V_{14}^H$ by singular value
decomposition, and use tr($\bm A\bm B$)=tr($\bm B\bm A$) and the property: if both $\bm P$ and $\bm Q$ are
positive semi-definite and Hermitian, then $0\leq \mbox{tr}(\bm P\bm
Q)\leq\mbox{tr}(\bm P)\mbox{tr}(\bm Q)$ \cite{Zhang_matrix_theory}; in (g) we use the property in (f) again; in (h)
we define $\bm T\triangleq\bm U_{24}^H\bm U_{14}$ with eigenvalues $\{\sin\theta_k\}$, where $\theta_k$ is the $k$th principal angle; in (i) we use $\mbox{tr}(\bm A\bm B)\leq\mathop{\Sigma}_{i=1}^n\lambda_i(\bm A)\lambda_i(\bm B)\leq \mathop{\Sigma}_{i=1}^n\lambda_{max}(\bm A)\lambda_i(\bm B)$; (j) is by definition of the chordal distance in Definition \ref{Def_chordal_dist}; in (k) we use
$\mbox{tr}(\bm P\bm Q\bm P^H)\leq \lambda_{max}(\bm Q)\mbox{tr}(\bm
P\bm P^H)$ reversely; in (l) $P_T$ is extracted out to form the term
$\frac{\mbox{tr}(\bm A\Sigma_{\bm x_P}\bm A^H)}{P_T}$, which is the
relay ratio by definition and by the fact $N_T d_{24,max}^2\geq\mbox{tr}(\bm D_{24}^2)$; (m) is by the fact that $\bm Q=\bm V(\bm I +\bm D)^{-1}\bm V^H$, where $\bm H_{24}\bm F\bm F^H\bm H_{24}^H=\bm V\bm D\bm V^H$ by eigenvalue decomposition; in (n) we set $\bm F=\sqrt{(1-\alpha)P_T}\tilde{\bm F}$ and use the two
properties \cite{Kumar_matrix_inequality}: 1. $\bm A\bm B$ and $\bm B\bm A$ have the same set of eigenvalues;
2. $\lambda_{min}(\bm A\bm B)\geq \lambda_{min}(\bm A)\lambda_{min}(\bm B)$ ; in (o) we define
$c_1\triangleq\frac{\lambda_{max}(\Sigma_{\bm x_P})}{\alpha P_T d_{24,max}^2}c$; the case of $d_{24,min}=0$
is discussed in \ref{remark_alpha}.

By equating the RHS of \eqref{EQ_approximated_PU2} to the upper bound of PU's rate when SU is not active, i.e.,
\begin{align}\label{EQ_PU_rate_UB}
\log|\bm I+\bm H_{14} \Sigma_{\bm x_P}\bm H_{14}^H|\leq N_T\log\left(1+\frac{1}{N_T}\lambda_{max}(\Sigma_{\bm x_P})\mbox{tr}(\bm D_{14}^2)\right),
\end{align}
we then can find \eqref{EQ_alpha_appr}. Note that this upper bound is by the properties in step (a) and (b). Note also that to derive \eqref{EQ_alpha_appr} we assume that $\tilde{\bm F}=\bm U_{23}$, where $\bm H_{23}=\bm U_{23}\bm D_{23}\bm U_{23}^H $ by eigenvalue decomposition. The assumption comes from the fact that before the pairing, optimal $\bm F$ is unknown and this selection makes the derivation tractable.

\section{Proof of Lemma \ref{Lemma_P1}}\label{Sec_proof_W_B}
We first rearrange the sum MSE in \eqref{EQ_Cov_mat_E} as
\begin{align}
\mbox{tr}(\bm E)=&\mbox{tr}\left(\bm W(\bm H_{23}\bm F\bm F^H\bm
H_{23}^H+ \bm I)\bm W^H
-\bm C\bm F^H\bm H_{23}^H\bm W^H-\bm W\bm H_{23}\bm F\bm C^H+\bm C\bm C^H\right).
\end{align}
Note that $\bm W$ is not included in any constraints. Thus the
Karush-Kuhn-Tucker (K.K.T.) condition which is the derivative of the
Lagrangian with respect to $\bm W$, is identical to
\begin{align}
\frac{\partial \mbox{tr}(\bm E)}{\partial \bm W^H}=\bm W(\bm
H_{23}\bm F\bm F^H\bm H_{23}^H+ \bm I)-\bm C\bm F^H\bm H_{23}^H=0
\end{align}
and thus $\bm W=\bm C\bm F^H\bm H_{23}^H(\bm H_{23}\bm F\bm F^H\bm
H_{23}^H+ \bm I)^{-1}$. After substituting $\bm W$ into
\eqref{EQ_Cov_mat_E}, we get
\begin{align}
&\mbox{tr}(\bm E)= \mbox{tr}\left(\bm C\bm C^H -\bm C\bm F^H\bm
H_{23}^H(\bm H_{23}\bm F\bm F^H\bm H_{23}^H+ \bm I)^{-1}\bm
H_{23}\bm F\bm C^H\right).\label{EQ_E_in_quadratic_C}
\end{align}
To solve $\bm C$ (or equivalently, $\bm B$), the lower triangular constraint should be taken into account, and which can be modeled by
\begin{align}\label{EQ_constraint_lower_triangular}
\bm S_k\bm B\bm  e_k = \bm 0_k,\,\,k=1\sim N_T.
\end{align}
Then we can form the Lagrangian of $\bm {P0}$ as
\begin{align}
\emph{L}=&\mbox{tr}\left((\bm I\!+\!\bm B)(\bm I\!+\!\bm B)^H\!-\!\!(\bm I\!+\!\bm B)\bm F^H\bm H_{23}^H\!(\bm H_{23}\bm F\bm F^H\bm H_{23}^H\!+\! \bm I)^{-1}\!\bm H_{23}\bm F\!(\bm I\!+\!\bm B)^H\right)-\Sigma_{k=1}^{N_T}\bm \mu_k^T\bm S_k\bm B\bm e_k\notag\\
&+\lambda\left\{\log|\bm I+\bm H_{14}\Sigma_{\bm x_P}\bm H_{14}^H|+\log|\bm I+\bm H_{24}\bm F\bm F^H\bm H_{24}^H|-\log\!\!|\bm I \!\!+ \!\!\bm H_{24}\bm F\!\!\bm F^H\bm H_{24}^H \!\!+\!\!(\bm H_{14}\!\!+\!\!\bm H_{24}\bm A)\!\Sigma_{\bm x_P}\!(\bm H_{14}\!\!+\!\!\bm H_{24}\bm A)^H|\right\}\notag\\
&+\gamma\left\{\mbox{tr}(\bm F\bm F^H)+\mbox{tr}(\bm A \Sigma_{\bm
x_P} \bm A^H)- P_T\right\},
\end{align}
where $\bm \mu_k\in\mathds{C}^k$, $\lambda\geq 0$, and $\gamma\geq
0$ are Lagrange multipliers. Note that the lower triangular
constraint is an equality constraint here, thus there is no
restriction on the sign of the multiplier $\bm \mu_i$. After equating the first
derivative of $\emph{L}$ with respect to $\bm B$ to zero, we have
\begin{align}
\frac{\partial\emph{L}}{\partial \bm B}=&(\bm I\!+\!\bm B)^H-\!\bm F^H\bm H_{23}^H\!(\bm H_{23}\bm F\!\bm F^H\bm H_{23}^H\!+\! \bm I)^{-1}\bm H_{23}\bm F(\bm I\!+\!\bm B)^H-\Sigma_{k=1}^{N_T} \bm S_k^T\bm \mu_k^*\bm e_i^T=\bm 0,
\end{align}
which can be further arranged as \eqref{EQ_B_solution} with
\eqref{EQ_Delta1} and \eqref{EQ_Delta2}, where the matrix inversion
lemma is used. To get $\bm \mu_i$ we first apply the constraint
\eqref{EQ_constraint_lower_triangular} to $\bm B$ in
\eqref{EQ_B_solution}. Then we have
\begin{align}
\bm S_k(\Delta_1+\Sigma\bm S_i^T\bm \mu_i^*\bm e_i^T)(\bm I+ \Delta_2)\bm e_k=\bm 0,\,\,k=1\sim N_T,\notag
\end{align}
where $\Delta_1$ and $\Delta_2$ are defined in \eqref{EQ_Delta1} and \eqref{EQ_Delta2}, respectively.
After expanding we can get
\begin{align}
\bm S_k\Delta_1(\bm I+ \Delta_2)\bm e_k+\bm S_k\left(\underset{i}\Sigma\bm S_i^T\bm \mu_i^*\bm e_i^T\right)(\bm I+ \Delta_2)\bm e_k=\bm 0,\,\,k=1\sim N_T.\notag
\end{align}
Then combine the terms of $\bm e_k$ we can get
\begin{align}
\bm S_k\bm S_k^T\bm \mu_k^* + \bm S_k\left(\underset{i}\Sigma\bm
S_i^T\bm \mu_i^*\bm e_i^T\right)\Delta_2\bm e_k+\bm c_k=\bm
0,\,\,k=1\sim N_T,\notag
\end{align}
where the first term is due to the fact that $\bm e_i^T\bm e_k=\delta(i-k)$, $\delta$ is the Dirac delta function.
After combining the terms of $\bm \mu_i^*$, we have
\eqref{EQ_mu_equations}. Note that for each $k$, there are $k$
scalar equations and thus there are $(N_T+1)N_T/2$ equations in total.
Meanwhile, there are also $(N_T+1)N_T/2$ scalar unknown variables in
the set of variables $\{\bm \mu_1^*\cdots\bm\mu_{N_T}^*\}$. Thus we
can solve the multipliers from the linear equations at hand.

%In the following we give a simple example with $N_T=3$, in which
%$\bm S_1=[1,\,0,\,0],\,\bm S_2 =[1,\,0,\,0;0,\,1,\,0], \,\bm S_3=\bm I_3$.
%After expanding and combining the terms of $\bm \mu_i^*$ in \eqref{EQ_mu_equations}, we can get
%\begin{align}\label{EQ_matrix_equation}
%&\left(\!\!
%  \begin{array}{ccc}
%    1+\Delta_{2,11} & \Delta_{2,21}\bm S_1\bm S_2^T & \Delta_{2,31}\bm S_1\bm S_3^T \\
%    \Delta_{2,12}\bm S_2\bm S_1^T & (1+\Delta_{2,22})\bm I_2 & \Delta_{2,32}\bm S_2\bm S_3^T \\
%    \Delta_{2,13}\bm S_3\bm S_1^T & \Delta_{2,23}\bm S_3\bm S_2^T & (1+\Delta_{2,33})\bm I_3 \\
%  \end{array}
%\!\!\right)\!\!\!\!\left(\!\!
%         \begin{array}{c}
%           \!\!\bm \mu_1^* \!\!\\
%           \!\!\bm \mu_2^* \!\!\\
%           \!\!\bm \mu_3^* \!\!\\
%         \end{array}
%       \!\!\right)\!\!=\!\!\left(\!\!
%                 \begin{array}{c}
%                   \!\!-\bm c_1 \!\!\\
%                   \!\!-\bm c_2 \!\!\\
%                   \!\!-\bm c_3 \!\!\\
%                 \end{array}\!\!
%               \right).
%\end{align}
%It can easily be seen that both $\bm c_1,\, \bm c_2, \mbox{ and }\bm
%c_3,$ and $\bm \mu_1^*,\,\bm \mu_2^*, \mbox{ and }\bm \mu_3^*,$ have
%dimensions $1, 2,$ and $3$, respectively. Thus we can easily solve
%the multipliers through the above matrix equation. Note that the
%matrix inverse always exists due to the additional term $1$ in the
%diagonal of the matrix on the LHS of \eqref{EQ_matrix_equation}.

\section{Proof of Lemma \ref{Lemma_Taylor}}\label{Sec_Proof_lemma_Taylor}
It is known that the function log(det(.)) is concave in its domain
and the term inside the determinant of the numerator on the LHS of
\eqref{EQ_coexistence_constraint3} is a quadratic form of $\bm F$
and $\bm A$, which is convex. Thus the convexity of the composition
of log(det(.)) and the quadratic form is unknown. To proceed, we
first use the inequality \cite{Li_det_trace_ineq} $\mbox{det}(\bm I+\bm M)\geq 1+\mbox{tr}(\bm M)$,
i.e., the numerator on the LHS of \eqref{EQ_coexistence_constraint3}
can be lower bounded by
\begin{align}
&\mbox{log}\left( 1+\mbox{tr}(\bm H_{24}\bm F\bm F^H\bm
H_{24}^H+(\bm H_{14}+\bm H_{24}\bm A)\Sigma_{\bm x_P}(\bm H_{14}+\bm
H_{24}\bm
A))\right)\notag\\
\overset{(a)}=&\mbox{log}\left(1+\mbox{vec}(\bm H_{24}\bm
F)^H\mbox{vec}(\bm
H_{24}\bm F)+\mbox{vec}(\bm H_{14}+\bm H_{24}\bm A)^H(\bm I\otimes
\Sigma_{\bm x_P})\mbox{vec}(\bm H_{14}+\bm H_{24}\bm
A))\right)\notag\\
\overset{(b)}=&\mbox{log}\left(1+t_1+t_2\right),\label{EQ_log_t1_t2}
\end{align}
where in (a) we use the properties $\mbox{tr}(\bm M\bm
M^H)=\mbox{vec}(\bm M)^H\mbox{vec}(\bm M)$ and $\mbox{tr}(\bm M\bm
N\bm M^H)=\mbox{vec}(\bm M)^H(\bm I\otimes \bm N)\mbox{vec}(\bm M)$;
in (b) we introduce the following slack variables $t_1\triangleq\mbox{vec}(\bm
H_{24}\bm F)^H\mbox{vec}(\bm H_{24}\bm F)$ and $t_2\triangleq \mbox{vec}(\bm H_{14}+\bm
H_{24}\bm A)^H(\bm I\otimes\Sigma_{\bm x_P})\mbox{vec}(\bm H_{14}+\bm H_{24}\bm A)$. Then we
use the \textit{matrix-lifting semi-definite relaxation}
\cite{Ding_lifting} to relax the above such that $t_1\geq\mbox{vec}(\bm
H_{24}\bm F)^H\mbox{vec}(\bm H_{24}\bm F)$ and $t_2\geq \mbox{vec}(\bm H_{14}+\bm
H_{24}\bm A)^H(\bm I\otimes\Sigma_{\bm x_P})\mbox{vec}(\bm H_{14}+\bm H_{24}\bm A)$, to
accommodate the use of CVX. After this relaxation, we can further
rearrange them as the linear matrix inequalities as shown in
\eqref{EQ_t1_LMI}, \eqref{EQ_t2_LMI}, and \eqref{EQ_A3_LMI},
respectively by the Schur's Complement Theorem.

After replacing the numerator on the LHS of
\eqref{EQ_coexistence_constraint3} by
\eqref{EQ_t1_LMI}, \eqref{EQ_t2_LMI}, and \eqref{EQ_A3_LMI} and substituting
$\bm F_2\triangleq\bm F\bm F^H$ (which is further relaxed by $\bm
F_2\succeq\bm F\bm F^H$) into the denominator on the LHS of
\eqref{EQ_coexistence_constraint3}, we can observe that the LHS of
\eqref{EQ_coexistence_constraint3} is the difference of two concave
functions, and the convexity is still unknown. However, we may
resort to the skill in \cite{pslin_CR} to do the first order Taylor
expansion to the denominator of the LHS of
\eqref{EQ_coexistence_constraint3} and which will become a linear
function of $\bm F_2$. By definition, we know that the first order Taylor
expansion of the function $f(\bm X)$ with respect to $\bm X$ at
$\tilde{\bm X}$ is $ f(\bm X)\leq f(\bm X_0) +\mbox{tr}(\bm D^H(\bm X - \tilde{\bm X}))$,
where $\bm D\triangleq\frac{\partial f}{\partial \bm X}\Big|_{\bm X=\tilde{\bm X}}. $
Then with the fact $\partial \log\det(\bm X)/\partial \bm X=(\bm
X^{-1})^T$, after applying the above to the denominator on the LHS
of \eqref{EQ_coexistence_constraint3}, we can get
\begin{align}&\log\!|\!\bm I \!+\!\bm H_{24}\bm F_2\bm H_{24}^H\!|\!\leq\! \log(\!|\!\bm I\!+\!\bm H_{24}\!\tilde{\bm F}_2\!\bm \!H_{24}^H\!|\!)\!+\!\mbox{tr}\!\left(\!(\bm I\!+\!\bm H_{24}\!\tilde{\bm F}_2\!\bm H_{24}^H\!)^{-1}\!\bm H_{24}\!\bm F_2\!\bm H_{24}^H\!\right)\!-\!\mbox{tr}\left(\!(\bm I\!+\!\bm H_{24}\!\tilde{\bm F}_2\!\bm
H_{24}^H\!)^{-1}\!\bm H_{24}\!\tilde{\bm F}_2\!\bm H_{24}^H\!\right).\notag
\end{align}
After substituting the above into \eqref{EQ_coexistence_constraint3}
and some arrangements, we can get Lemma \ref{Lemma_Taylor}.

%\begin{align}\label{EQ_d14}
%\chi_{24}^2\left( d_c^2+c_1\right)=\chi_{24}^2\left( d_c^2+\frac{\lambda_{max}(\Sigma_{\bm x_P})}{\alpha P_T d_{24,max}^2}\right)\leq \lambda_{max}(\Sigma_{\bm x_P}),
%\end{align}
%then $\alpha$ increases with increasing tr$(\bm D_{14}^2)$, vice versa, to make the approximated coexistence constraint valid. Let us consider a more conservative coexistence constraint by removing the term $d_c^2$. With the definition of $\chi_{24}$, \eqref{EQ_d14} becomes $P_T\cdot d_{24,min}^2\geq 1,$ where $\alpha$ is eliminated by the numerator of $\alpha/(1-\alpha)$ on the RHS of (p).
%Therefore, we can write
%\begin{align}
%&\log|\bm I \!+\! (\bm I \!+\! \bm H_{24}\bm F\bm F^H\bm H_{24}^H)^{-1}(\bm
%H_{14}\!+\!\bm H_{24}\bm A)\Sigma_{\bm x_P}(\bm H_{14}\!+\!\bm H_{24}\bm
%A)^H|\!\varpropto\! \log\!\left(\!\!{\alpha}\cdot P_P^{-1}\cdot \chi_{24}^2 d_c^2(\mbox{tr}(\bm D_{14}^2))^{1-2\cdot \mathds{1}_{P_T\cdot d_{24,min}^2\geq 1.}}\!\right).\label{EQ_proportiona}
%\end{align}

%Then we can summarize that when SU is active, the upper bound of PU's rate is proportional to the logarithm of the product of the relay ratio, SU's transmit power, and the largest principal angle by Definition \ref{Def_principal_angle}.
\end{proof}
\bibliographystyle{IEEEtran}

\renewcommand{\baselinestretch}{1.5}
\bibliography{IEEEabrv,SecrecyPs2}
\newpage

\end{document}